\def\BibTeX{{\rm B\kern-.05em{\sc i\kern-.025em b}\kern-.08em
    T\kern-.1667em\lower.7ex\hbox{E}\kern-.125emX}}
\newtheorem{definition}{Definition}
\newtheorem{lemma}{Lemma}
\newtheorem{problem}{Problem}
\newtheorem{theorem}{Theorem}
\newcounter{relctr} 
\everydisplay\expandafter{\the\everydisplay\setcounter{relctr}{0}}
\newcommand\labrel[2]{%
  \begingroup
    \refstepcounter{relctr}%
    \stackrel{\textnormal{(\alph{relctr})}}{\mathstrut{#1}}%
    \originallabel{#2}%
  \endgroup
}
\begin{document}

\title{ 
AdaptSLAM: Edge-Assisted Adaptive SLAM with Resource Constraints via Uncertainty Minimization}

\author{Ying Chen\IEEEauthorrefmark{1}, 
Hazer Inaltekin\IEEEauthorrefmark{2}, Maria Gorlatova\IEEEauthorrefmark{1} \\
\IEEEauthorrefmark{1}Duke University, Durham, NC, \IEEEauthorrefmark{2}Macquarie University, North Ryde, NSW, Australia\\
\IEEEauthorrefmark{1}\{ying.chen151,  maria.gorlatova\}@duke.edu, \IEEEauthorrefmark{2}hazer.inaltekin@mq.edu.au}

\maketitle

\begin{abstract} Edge computing is increasingly proposed as a solution for reducing resource consumption of mobile devices running simultaneous localization and mapping (SLAM) algorithms, with most edge-assisted SLAM systems assuming the communication resources between the mobile device and the edge server to be unlimited, or relying on heuristics to choose the information to be transmitted to the edge. This paper presents AdaptSLAM, an edge-assisted visual (V) and visual-inertial (VI) SLAM system that adapts to the available communication and computation resources, based on a theoretically grounded method we developed to select the subset of keyframes (the representative frames) for constructing the best local and global maps in the mobile device and the edge server under resource constraints. We implemented AdaptSLAM to work with the state-of-the-art open-source V- and VI-SLAM ORB-SLAM3 framework, and demonstrated that, under constrained network bandwidth, AdaptSLAM reduces the tracking error by 62\% compared to the best baseline method.

\end{abstract}

\begin{IEEEkeywords}
Simultaneous localization and mapping, edge computing, uncertainty quantification and minimization
\end{IEEEkeywords}

\maketitle

\section{Introduction}

Simultaneous localization and mapping (SLAM), the process of simultaneously constructing a map of the environment and tracking the mobile device's pose within it, is an essential capability for a wide range of applications, such as autonomous driving and robotic navigation~\cite{rosen2021advances}. In particular, visual (V) and visual-inertial (VI) SLAM, which use cameras either alone or in combination with inertial sensors, have demonstrated remarkable progress over the last three decades~\cite{cadena2016past}, and have become an indispensable component of emerging mobile applications such as drone-based surveillance~\cite{forster2013collaborative,williams2015scalable} 
    and markerless augmented reality~\cite{ARCore, ARkit,scargill2022here,yeh20183d}. 
    
Due to the high computational demands placed by the V- and VI-SLAM on mobile devices~\cite{xu2020edge,cao2022edge,ben2020edge,ali2022edge}, offloading parts of the workload to edge servers has recently emerged as a promising solution for lessening the loads on the mobile devices and improving the overall performance~\cite{xu2020edge,cao2022edge,deutsch2016framework,ben2020edge,ali2022edge,Karrer2018CVI,li2017corb,schmuck2019ccm,wright2020cloudslam,xu2022swarmmap}. However, such approach experiences performance degradation under resource limitations and fluctuations. The existing edge-assisted SLAM solutions either assume wireless network resources to be sufficient for unrestricted offloading, or rely on heuristics in making offloading decisions. 
    By contrast, in this paper we develop an edge computing-assisted SLAM framework, which we call AdaptSLAM, that intelligently adapts to both communication and computation resources to maintain high SLAM performance. 
    Similar to prior work~\cite{deutsch2016framework,ben2020edge,ali2022edge,Karrer2018CVI,li2017corb,schmuck2019ccm,wright2020cloudslam}, AdaptSLAM  runs a real‐time tracking module and maintains a local map on the mobile device, while offloading non‐time‐critical and computationally expensive processes (global map optimization and loop closing) to the edge server. 
    However, unlike prior designs, 
    AdaptSLAM 
    uses a \emph{theoretically grounded method} to build the local and global maps of limited size, and minimize the uncertainty of the maps, laying the foundation for the optimal adaptive offloading of SLAM tasks under the communication and computation constraints.

First, 
we develop an uncertainty quantification  model for the local and global maps in edge-assisted V-SLAM and VI-SLAM. Specifically, since these 
maps are built from the information contained in the keyframes (i.e., the most representative frames)~\cite{mur2017orb,orbslam3,vins}, the developed 
model characterizes how the keyframes and the connections between them 
contribute to the uncertainty. 
\textit{To the best of our knowledge,
this is the first uncertainty quantification model for 
V-SLAM and VI-SLAM in 
edge-assisted architectures.}

Next,  we apply the developed
uncertainty quantification model to efficiently select subsets of keyframes to build local and global maps under the constraints of limited computation and communication resources. The local and global map construction is formulated as NP-hard cardinality-constrained combinatorial optimization problems~\cite{bian2017guarantees}. 
We demonstrate that the map construction problems are `close to' submodular problems under some conditions, \textit{propose a low-complexity greedy-based algorithm to obtain near-optimal solutions}, and 
present a computation reuse method to speed up 
map construction. We implement AdaptSLAM in conjunction with the state-of-the-art open-source V- and VI-SLAM ORB-SLAM3~\cite{orbslam3} framework, and evaluate the implementation with both 
simulated and real-world communication and computation conditions. 
Under constrained 
bandwidth, AdaptSLAM reduces the tracking error by 62\%  compared with the best baseline method.

To summarize, the main contributions of this paper are: (i) the first uncertainty quantification model of maps in V- and VI-SLAM under the edge-assisted architecture, 
(ii) an analytically grounded algorithm for efficiently selecting subsets of keyframes to build local and global maps under computation and communication resource budgets, 
and (iii)  a comprehensive evaluation of AdaptSLAM on two configurations of mobile devices. 
We open-source AdaptSLAM 
via GitHub.\footnote{https://github.com/i3tyc/AdaptSLAM}

The rest of this paper is organized as follows. \S\ref{sec:RelatedWork} reviews the related work, \S\ref{sec:Preliminaries}  provides the preliminaries, \S\ref{systemmodel} and \S\ref{systemmodel2} introduce the AdaptSLAM system architecture and model,  \S\ref{sec:formulation} presents the problem formulation, 
and \S\ref{sec:ARproblem1} presents the problem solutions. We
present the evaluation in \S\ref{sec:evaluation} and conclude the paper in \S\ref{sec:conclusion}.

\section{Related Work}
\label{sec:RelatedWork}

\textbf{V- and VI-SLAM.}
Due to the affordability of cameras 
and the richness of information provided by them, V-SLAM has been widely studied in the past three decades~\cite{cadena2016past}. 
It 
can be classified into 
direct approaches (LSD-SLAM~\cite{engel2014lsd}, DSO~\cite{engel2017direct}), which operate directly on pixel intensity values, and feature-based approaches (PTAM~\cite{klein2007parallel}, ORB-SLAM2~\cite{mur2017orb}, Pair-Navi~\cite{dong2019pair}), which extract salient regions in each camera frame. We focus on feature-based approaches 
since direct approaches require 
high computing power for real-time performance~\cite{cadena2016past}. 
To provide robustness (to textureless areas,  motion blur, 
illumination changes), there is a growing trend of employing \mbox{VI-SLAM}, that assists the cameras with an inertial measurement unit (IMU)~\cite{weiss2012real,vins,orbslam3}; VI-SLAM has become the de-facto standard SLAM method for 
modern augmented reality platforms~\cite{ARCore,ARkit}. In VI-SLAM, visual information and IMU data can be loosely~\cite{weiss2012real} or tightly~\cite{vins,orbslam3} coupled.  We implement AdaptSLAM based on ORB-SLAM3~\cite{orbslam3}, a state-of-the-art open-source V- and VI-SLAM system which tightly integrates visual and IMU information.

\textbf{Edge-assisted SLAM.}
Recent studies \cite{ORBBuf,williams2015scalable, riazuelo2014c2tam,deutsch2016framework,ben2020edge,Karrer2018CVI,li2017corb,schmuck2019ccm,wright2020cloudslam,xu2020edge,xu2022swarmmap,huang2022edge} have focused on offloading 
parts of SLAM workloads from mobile devices to edge (or cloud) servers  to reduce
mobile device resource consumption. 
A standard approach is to offload computationally expensive tasks 
(global map optimization, loop closing), while exploiting 
onboard computation for running the tasks critical to the mobile device's autonomy 
(tracking, local map optimization)~\cite{deutsch2016framework,ben2020edge,Karrer2018CVI,li2017corb,schmuck2019ccm,wright2020cloudslam,xu2022swarmmap}. 
Most 
edge-assisted SLAM frameworks 
assume 
wireless network resources to be sufficient for unconstrained offloading~\cite{williams2015scalable,riazuelo2014c2tam,deutsch2016framework,Karrer2018CVI,li2017corb,schmuck2019ccm}; 
some use heuristics to choose the information to be offloaded
under communication constraints~\cite{ben2020edge,xu2020edge,ORBBuf,wright2020cloudslam,xu2022swarmmap,huang2022edge}.
 Some frameworks 
only keep the newest keyframes in the local map to combat the constrained computation resources on mobile devices~\cite{Karrer2018CVI,schmuck2019ccm}. 
 Complementing
this work, we propose a theoretical framework to characterize how  keyframes contribute to the SLAM performance, 
laying the foundation for the adaptive offloading of SLAM tasks under the communication and computation 
constraints.

\textbf{Uncertainty quantification and minimization.} Recent work~\cite{khosoussi2019reliable,carlone2018attention,chen2021anchor} has 
focused on quantifying and minimizing the pose estimate uncertainty 
in V-SLAM. Since the pose estimate accuracy is difficult to obtain 
due to the lack of ground-truth poses of mobile devices, the uncertainty can 
guide the 
decision-making in SLAM systems. 
In~\cite{khosoussi2019reliable,carlone2018attention}, it 
is used for measurement selection (selecting measurements between keyframes~\cite{khosoussi2019reliable} and selecting extracted features of keyframes~\cite{carlone2018attention}); in~\cite{chen2021anchor}, it 
is used for anchor selection (selecting keyframes to make their poses have `zero uncertainty'). Complementing this work, we 
quantify the pose estimate uncertainty of both V- and VI-SLAM under the edge-assisted architecture. After the uncertainty quantification, we study the problem of selecting a subset of keyframes to minimize the uncertainty. 
This problem is largely overlooked in the literature, but is of great importance for tackling computation and communication constraints in edge-assisted SLAM.

\section{Preliminaries}
\label{sec:Preliminaries}
\subsection{Graph Preliminaries}
A directed multigraph is defined by the tuple of sets $\mathcal{G}=(\mathcal{V},\mathcal{E}, \mathcal{C})$, where $\mathcal{V}=\{v_1, \cdots, v_{|\mathcal{V}|} \}$ is the set of nodes, $\mathcal{E}$ is the set of edges, and $\mathcal{C}$ is the set of edge categories. 
Let $e=((v_i,v_j),c) \in \mathcal{E}$ denote the edge, where the nodes $v_i,v_j \in\mathcal{V}$ are the head and tail of $e$, and $c\in \mathcal{C}$ is the category of $e$. We let $w_e$ be the weight of edge $e$. We allow multiple edges from ${v}_i$ to ${v}_j$ to exist, and denote the set of edges from ${v}_i$ to ${v}_j$ by $\mathcal{E}_{i,j}$. Note that the edges in $\mathcal{E}_{i,j}$ are differentiated from each other by their category labels. The total edge weight from nodes ${v}_i$ to ${v}_j$ is given by $w_{i,j} = \sum\limits_{e \in \mathcal{E}_{i,j}} {{w_e}}$, which is the sum of all edge weights from  ${v}_i$ to ${v}_j$. 

The weighted Laplacian matrix $\mathbf{L}$ of graph $\mathcal{G}$ is a $|\mathcal{V}| \times |\mathcal{V}|$ matrix where the $i,j$-th element $\mathbf{L}_{i,j}$ is given by:

\[{{\bf{L}}_{i,j}} = \left\{ {\begin{array}{*{20}{c}}
 - {w_{i,j},}&{i \ne j}\\
{\sum\limits_{e \in \mathcal{E}_i} {{w_e},} }&{i = j}
\end{array}} \right.,\]
where $\mathcal{E}_i \subseteq \mathcal{E}$ is the set of all edges whose head is node $v_i$.
 The reduced Laplacian matrix $\tilde{\mathbf L}$ is obtained by removing an arbitrary node (i.e., removing the row and
column associated to the node) from ${\mathbf L}$.

\subsection{Set Function} 
We define a set function $f$ for a finite set $V$ as a mapping $f:{2^V} \to \mathds{R}$ that assigns a value
$f\left(S\right)$ to each subset $S\subseteq V$. 

\textbf{Submodularity.} A set function $f$ is submodular if
$f\left( L \right) + f\left( S \right) \geqslant f\left( {L \cup S} \right) + f\left( {L \cap S} \right)$ for all $L,S \subseteq V$.

\textbf{Submodularity ratio.} The submodularity ratio of a set function $f$ with respect to a parameter $s$ is 
\begin{equation}
\label{eq:ratio}
\gamma =\mathop {\min }\limits_{L \subseteq V,S \subseteq V,\left| S \right| \leqslant s,x \in {V\setminus(S \cup L)}} \frac{{f\left( {L \cup \left\{ x \right\}} \right) - f\left( L \right)}}{{f\left( {L \cup S \cup \left\{ x \right\}} \right) - f\left( {L \cup S} \right)}}.
\end{equation}
where we define $0/0 \coloneqq 1$.

The cardinality-fixed maximization problem is 
\begin{equation}
\label{eq:SMP}
\mathop {\max }\limits_{_{S \subseteq V,\left| S \right| = s}} f\left( S \right).
\end{equation}
The keyframe selection optimization is closely related to the cardinality-fixed maximization problem introduced above, which is an NP-hard problem \cite{nemhauser1978analysis}. However, for submodular set functions, there is an efficient greedy approach that will come close to the optimum value for \eqref{eq:SMP}, with a provable optimality gap. This result is formally stated in Theorem 1. 

  \setlength{\textfloatsep}{3pt}
 \begin{algorithm}[b]
 \caption{Greedy algorithm to solve~\eqref{eq:SMP}}
 \begin{algorithmic}[1]
  \label{alg:greedy}
 \STATE $S^\#\leftarrow\emptyset$;
 \WHILE{($\left| {{S^\# }} \right| < s$)}
 \STATE ${x^\star} \leftarrow \arg \mathop {\max }\limits_{x} 
 f({S^\# } \cup \{ x\} ) - f({S^\# })$.
$S^\# \leftarrow {S^\# } \cup \{ x^\star\} $.
  \ENDWHILE
 \end{algorithmic} 
 \end{algorithm}

\begin{theorem}
\label{theorem:submax}
 \cite{nemhauser1978analysis,das2018approximate} Given a non-negative and monotonically increasing 
 set function $f$ with a submodularity ratio~$\gamma$, let $S^{\#}$ be the solution produced by the greedy algorithm (Algorithm~\ref{alg:greedy}) and $S^{\star}$ be the solution of \eqref{eq:SMP}. Then, 
 $
f\left( {{S^\# }} \right) \geqslant \left( {1 - \exp(-\gamma)} \right)f\left( {{S^\star}} \right).$
\end{theorem}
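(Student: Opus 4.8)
The plan is to follow the classical Nemhauser--Wolsey--Fisher argument, adapted to the approximate-submodularity setting through the ratio $\gamma$, by tracking how fast the greedy objective closes the gap to the optimum. Let $S_0 = \emptyset \subseteq S_1 \subseteq \cdots \subseteq S_s = S^{\#}$ denote the chain of greedy iterates, where $S_i$ is the set held after $i$ rounds of Algorithm~\ref{alg:greedy}, and write $\delta_i = f(S^{\star}) - f(S_i)$ for the residual gap. The goal is to establish a per-step contraction $\delta_{i+1} \le \left(1 - \gamma/s\right)\delta_i$ and then unroll it.

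The heart of the argument is a single claim: at every round the best available marginal gain is at least $\tfrac{\gamma}{s}\delta_i$. First I would use monotonicity to write $\delta_i \le f(S_i \cup S^{\star}) - f(S_i)$, and then expand the right-hand side as a telescoping sum over an arbitrary ordering $x^{\star}_1,\dots,x^{\star}_s$ of the elements of $S^{\star}$:
\begin{equation*}
f(S_i \cup S^{\star}) - f(S_i) = \sum_{j=1}^{s}\Big[ f\big(S_i \cup \{x^{\star}_1,\dots,x^{\star}_j\}\big) - f\big(S_i \cup \{x^{\star}_1,\dots,x^{\star}_{j-1}\}\big) \Big].
\end{equation*}
Applying the definition of $\gamma$ in \eqref{eq:ratio} term by term, with $L = S_i$, $S = \{x^{\star}_1,\dots,x^{\star}_{j-1}\}$ (of size at most $s$), and $x = x^{\star}_j$, each telescoped increment is bounded above by $\tfrac{1}{\gamma}\big[f(S_i \cup \{x^{\star}_j\}) - f(S_i)\big]$; optimal elements already contained in $S_i$ contribute zero on both sides and are harmless. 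Summing over $j$ gives $\gamma\,\delta_i \le \sum_{j}\big[f(S_i \cup \{x^{\star}_j\}) - f(S_i)\big]$. Since Algorithm~\ref{alg:greedy} selects the element of maximal marginal gain and the maximum dominates the average of these $s$ single-element gains, the chosen increment satisfies $f(S_{i+1}) - f(S_i) \ge \tfrac{1}{s}\sum_{j}\big[f(S_i \cup \{x^{\star}_j\}) - f(S_i)\big] \ge \tfrac{\gamma}{s}\delta_i$.

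With the contraction in hand, the remainder is routine: rewriting $f(S_{i+1}) - f(S_i) = \delta_i - \delta_{i+1}$ yields $\delta_{i+1} \le \left(1 - \gamma/s\right)\delta_i$, so by induction $\delta_s \le \left(1 - \gamma/s\right)^{s}\delta_0$. Bounding $\left(1 - \gamma/s\right)^{s} \le \exp(-\gamma)$ and invoking non-negativity ($f(\emptyset)\ge 0$, hence $\delta_0 \le f(S^{\star})$) gives $f(S^{\star}) - f(S^{\#}) \le \exp(-\gamma)\,f(S^{\star})$, which rearranges to the claimed bound.

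I anticipate the main obstacle to be the term-by-term application of the ratio definition in \eqref{eq:ratio}: because the paper defines $\gamma$ through a single added element $x$ rather than the aggregated ``sum form'' used in some references, I must chain the inequality across the telescoping decomposition and verify that the cardinality constraint $\left|S\right|\le s$ and the membership condition $x \in V\setminus(S\cup L)$ hold at each step (in particular, handling optimal elements that greedy has already selected). Once that bookkeeping is settled, monotonicity, the max-versus-average inequality, and the exponential bound are all standard.
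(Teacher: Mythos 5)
The paper does not prove this theorem itself; it is imported by citation from Nemhauser et al.\ and Das--Kempe, so there is no in-paper argument to compare against. Your proposal reconstructs the standard argument correctly, and correctly adapted to the paper's \emph{per-element} form of the submodularity ratio in \eqref{eq:ratio}: chaining the ratio across the telescoping decomposition with $L=S_i$, $S=\{x^{\star}_1,\dots,x^{\star}_{j-1}\}$ (so $|S|\leqslant s-1$) and $x=x^{\star}_j$ is exactly the right way to use that definition, and your observation that optimal elements already in $S_i$ contribute zero to both sides disposes of the membership condition $x\in V\setminus(S\cup L)$. The remaining steps (monotonicity to pass from $f(S^{\star})$ to $f(S_i\cup S^{\star})$, max-dominates-average since $|S^{\star}|=s$, the contraction $\delta_{i+1}\leqslant(1-\gamma/s)\delta_i$, and $(1-\gamma/s)^s\leqslant\exp(-\gamma)$, which is legitimate because $\gamma\leqslant 1$ by taking $S=\emptyset$ in \eqref{eq:ratio}) are all sound; the only degenerate case, $\gamma=0$, makes the claimed bound vacuous anyway by non-negativity of $f$. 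No gaps.
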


\subsection{SLAM Preliminaries}
\label{sec:SLAM preliminaries}

The components of SLAM systems include~\cite{cadena2016past,orbslam3,vins}:

\textbf{Tracking.} The tracking module detects 2D \textit{feature points} (e.g., by extracting SIFT, SURF, or ORB descriptors) in the current frame. Each feature point corresponds to a 3D \textit{map point} (e.g., a distinguishable landmark) in the environment. 
The tracking module uses these feature points to find correspondences with a previous reference frame. It also processes the IMU 
measurements. 
Based on the correspondences in feature points and the IMU measurements, it calculates the relative pose change between the selected reference frame and the current frame. The 
module also determines if this frame should be a keyframe 
based on a set of criteria such as the similarity to the previous keyframes \cite{orbslam3}.

\textbf{Local and global mapping.} 
It
finds correspondences (of 
feature points)  between the new keyframe and the other
keyframes in the map. It then performs  map optimizations, i.e.,
estimates the keyframe poses given the 
common feature points between the keyframes and the IMU measurements.
\textit{Map optimizations are computationally expensive.}  
In edge-assisted SLAM, global mapping runs 
on the server~\cite{deutsch2016framework,ben2020edge,Karrer2018CVI,li2017corb,schmuck2019ccm,wright2020cloudslam}.

\textbf{Loop closing.} By comparing the new keyframe to all previous keyframes, the module checks if the new keyframe is revisiting a place. If so (i.e., if a loop is detected), it establishes connections between the keyframe and all related previous ones, and then performs global map optimizations. Loop closing is computationally expensive and can be offloaded to the edge server in the edge-assisted SLAM~\cite{deutsch2016framework,ben2020edge,Karrer2018CVI,li2017corb,schmuck2019ccm,wright2020cloudslam}.

\section{AdaptSLAM System Architecture}
\label{systemmodel}

The design of AdaptSLAM is shown in Fig.~\ref{architecture}. 
The mobile device, equipped with a camera and an IMU, can communicate with the edge server bidirectionally. The mobile device and the edge server cooperatively run SLAM algorithms to 
estimate the mobile device's pose and a map of the environment. AdaptSLAM optimizes the SLAM performance under computation resource limits of the mobile device and communication resource limits between the mobile device and the edge server.

We split the modules between the mobile device and the edge server similar to~\cite{deutsch2016framework,ben2020edge,Karrer2018CVI,li2017corb,schmuck2019ccm,wright2020cloudslam,xu2022swarmmap}. 
The mobile device offloads loop closing and global map optimization modules to the edge server, while running real-time tracking and local mapping onboard. 
Unlike existing edge-assisted SLAM systems~\cite{deutsch2016framework,ben2020edge,Karrer2018CVI,li2017corb,schmuck2019ccm,wright2020cloudslam,xu2022swarmmap}, \emph{AdaptSLAM aims to optimally construct the local and global maps under the computation and communication resource constraints.} 
The design of AdaptSLAM is mainly focused on two added modules, local map construction and global map construction highlighted in purple in Fig.~\ref{architecture}.
In local map construction, due to the computation resource limits, 
the mobile device selects a subset of keyframes from candidate keyframes to build a local map. 
In global map construction, to adapt to the constrained wireless connection for uplink transmission, the mobile device also selects a subset of keyframes to be transmitted to the edge server to build a global map. The AdaptSLAM optimally selects the keyframes to build local and global maps, minimizing the pose estimate uncertainty under the resource constraints.

Similar to~\cite{ben2020edge}, the selected keyframes are transmitted from the mobile device to the server, and the map after the global map optimization is transmitted from the server to the mobile device. For the uplink transmission, instead of the whole keyframe, 
the 2D feature points extracted from the keyframes are sent. For the downlink communication, the poses of the keyframes obtained by the global map optimization, and the feature points of the keyframes are transmitted.

 \begin{figure}[t]
\centering
   \includegraphics[width=0.5\textwidth]{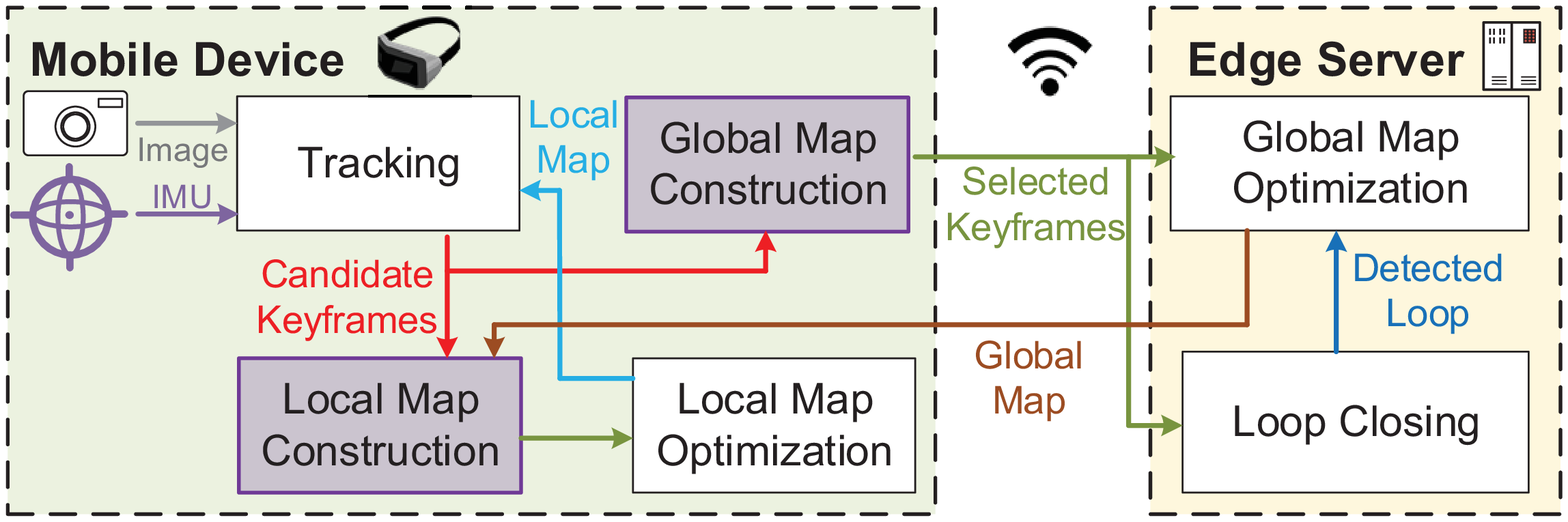}
       \vspace{-0.3cm}
  \caption{Overview of the AdaptSLAM system  architecture. 
  }
   \vspace{-0.08cm}
     \label{architecture}
     \end{figure}

\section{AdaptSLAM System Model}
\label{systemmodel2}

\subsection{The Pose Graph and the Map}
\label{sec:map}

We divide time into slots
of equal size of $\Delta t$. We  introduce the pose graph and the map at time slot $t$ that 
lasts for $\Delta t$ seconds. For clarity of notation, we will omit the time index below.

\begin{definition}[Pose graph]
\label{def:posegraph}
For a given index set $\mathcal{K} = \{1, \ldots, |\mathcal{K}|\}$ (indexing camera poses and representing keyframes), the pose graph is defined as the undirected multigraph $\mathcal{G} = \left(\mathcal{K}, \mathcal{E}, \mathcal{C} \right)$, where $\mathcal{K}$ is the node set, $\mathcal{E}$ is the edge set, and $\mathcal{C}=\left\{ \mathsf{IMU},\mathsf{vis}\right\}$ is the category set. Here, $\mathsf{IMU}$  stands for the IMU edges, and  $\mathsf{vis}$ stands for the covisibility edges.
\end{definition}

Given a pose graph $\mathcal{G} = \left(\mathcal{K}, \mathcal{E}, \mathcal{C} \right)$, there is a camera pose ${\bf P}_n = \left(x,y,z,w_x,w_y,w_z\right)$ for all $n \in \mathcal{K}$, where the first three entries are the 3-D positions and the last three ones are the Euler angles (yaw, pitch and roll) representing the camera orientation. Edges in $\mathcal{E}$ are represented as $e = ((n,m),c)$ for $n, m \in \mathcal{K}$ and $c\in  \mathcal{C}$. Two keyframes in $\mathcal{K}$ are connected by a covisibility edge if there are 3D map points observed in both keyframes. Two consecutive keyframes are connected by an IMU edge if there are accelerometer and gyroscope readings from one keyframe to another. 
There may exist both a covisibility edge and an IMU edge between two keyframes. 

For each $e = ((n,m),c) \in \mathcal{E}$, we 
observe relative noisy pose measurements between $n$ and $m$, which is written as 
$
\label{noisyMeasurement}
{\Delta _e} = {\bf P}_m - {{\bf P}_n} + {\bf x}_e
$, 
where ${\bf x}_e$ is the measurement noise on edge $e$. The map optimization  problem is to find the maximum likelihood estimates $\{{\bf{\tilde P}}_n\}_{n \in \mathcal{K}}$ for the actual camera poses $\{{\bf P}_n\}_{n \in \mathcal{K}}$. For Gaussian distributed edge noise, the map optimization problem is 
\begin{equation}
\label{eq:PGO}
    \mathop {\min }\limits_{\left\{{\bf \tilde{P}}_n\right\}_{n \in \mathcal{K}}} \sum\limits_{e \in {{\cal E}}} {{{\left( {{{\bf{\tilde{x}}}_e}} \right)}^ \top }{{\cal I}_e}{{\bf{\tilde{x}}}_e}}, 
\end{equation} 
where ${\bf \tilde{x}}_e = {\Delta _e} - {\bf{\tilde P}}_m + {\bf{\tilde P}}_n$ and $\mathcal{I}_e$ is the information matrix (i.e., inverse covariance matrix) of the measurement error on 
$e $~\cite{carlone2016planar}.  $\left({\bf \tilde{x}}_e\right)^\top\mathcal{I}_e {\bf \tilde{x}}_e$ is the Mahalanobis norm~\cite{orbslam3,vins} of the {\em estimated} measurement noise for $e$ with respect to  $\mathcal{I}_e$.

Below,  we 
assume that 
the measurement noise $\mathbf{x}_{e}$ is Gaussian distributed with isotropic covariance (as in~\cite{Placed21Fast,khosoussi2019reliable,7487264}). 
We assume that the information matrix ${\mathcal{I} _{e}}$ can be characterized by a weight assigned to $e$~\cite{Placed21Fast,cramer}. Specifically, ${ {\mathcal{I} _{e}} } = w_{e}{\mathcal{I} }$, where $w_{e} \geqslant 1$ is  the weight for $e$ and ${\mathcal{I}}$ is the matrix that is constant for all measurements. We note that the relative  measurements between keyframes $n$ and $m$ 
introduce the same information for them. 
We assume  all weights $w_e$ to be independent from each other for edges between different pairs of keyframes as in~\cite{orbslam3,cramer,boumal2014cramer,vins}. 

The map optimization problem in \eqref{eq:PGO} is solved  by standard methods such as Levenberg-Marquardt algorithm implemented in
g2o \cite{kummerle2011g} and Ceres solvers \cite{ceres-solver} as in \cite{orbslam3,vins}.

\begin{definition}[Anchor]
We say that a node is the anchor of the pose graph if the pose of the node is known.
\end{definition}

The map (local or global) consists of the pose graph (in Definition~\ref{def:posegraph}) and map points in the environment. In this paper, we will use the terms map and pose graph interchangeably. Without loss of generality, we will also assume that the global (or local) map is anchored on the first node, as in~\cite{cramer,Placed21Fast}. This assumption is made because SLAM can only estimate the relative pose change based on the covisibility and inertial measurements, while the absolute pose estimate in the global coordinate system cannot be provided.

\subsection{The Local Map} 
\label{sec:userFunc}

\textbf{Local map construction.}
The candidate keyframes are selected from camera frames according to the selection strategy in ORB-SLAM3 \cite{orbslam3}, and these candidate keyframes form the set $\mathcal{K}$. Due to the constrained computation resources, the mobile device  selects a fixed keyframe set $\mathcal{K}_{fixed}$ and a local keyframe set $\mathcal{K}_{loc}$ from the candidate keyframes, where $\left| {{{\cal K}_{fixed}}} \right|\leqslant l_{f}$ and $\left| {{{\cal K}_{loc}}} \right|\leqslant l_{loc}$. 
The fixed keyframe set $\mathcal{K}_{fixed} \subseteq \mathcal{K}_{g,user}$ is selected from the global map $\mathcal{K}_{g,user}$ that was last transmitted from the edge server. The poses of keyframes in  $\mathcal{K}_{fixed}$ act as fixed priors in the local map optimization. This is because poses of keyframes in  $ \mathcal{K}_{g,user}$  are already optimized in the global map optimization  and hence have low uncertainty. The poses of keyframes in the local keyframe set $\mathcal{K}_{loc}\subseteq \mathcal{K}\setminus\mathcal{K}_{g,user}$ will be optimized according to the map optimization problem introduced above.

The edges between keyframes in $\mathcal{K}_{loc}$ form the set $\mathcal{E}_{loc}$, and the edges whose one  node belongs to  $\mathcal{K}_{loc}$ and another node belongs to $\mathcal{K}_{fixed}$ form the set $\mathcal{E}_{l,f}$.

\textbf{Local map optimization.}
After selecting $\mathcal{K}_{loc}$ in the local map construction, the local map optimization is to optimize the estimated poses $\left\{\tilde{\bf{P}}_{n}\right\}_{n\in \mathcal{K}_{loc}}$ to minimize the sum of Mahalanobis norms $\sum\limits_{e \in {\mathcal{E}_{loc}} \cup {\mathcal{E}_{l,f}}} {{{\left( {{{\bf{\tilde{x}}}_e}} \right)}^ \top }{{\cal I}_e}{{\bf{\tilde{x}}}_e}} $. Note that 
in the local pose graph optimization, the keyframes in $\mathcal{K}_{fixed}$ are included in the optimization with their poses fixed. The local map optimization to solve \eqref{eq:PGO} is 

\begin{equation}
\label{localopt}\mathop {\min }\limits_{\{{\bf{\tilde{P}}}_n\}_{n \in {\mathcal{K}_{loc}}}} \sum\limits_{e \in {{\cal E}_{loc}} \cup {{\cal E}_{l,f}}} 
{{{\left( {{\bf{\tilde{x}}}_e} \right)}^\top}
\mathcal{I}_ e
{\bf{\tilde{x}}}_e}.
\end{equation}

\subsection{The Global Map}
\label{sec:serverFunc}

\textbf{Global map construction.}
Due to the limited bandwidth between the mobile device and the edge server, only a subset of candidate keyframes are offloaded to the edge server to build a global map. 
The selection of keyframes to be offloaded will be optimized to minimize the pose estimation uncertainty of the global map when considering the underlying wireless network constraints.

The edge server maintains the global map, denoted as $\mathcal{K}_{g,edge}$,
holding all keyframes uploaded by the mobile device. The edges between keyframes in the global map $\mathcal{K}_{g,edge}$ constitute the set $\mathcal{E}_{glob}$.  Note that $\mathcal{K}_{g,edge}$ may be different from $\mathcal{K}_{g,user}$, because 
the global map is large and it takes time to transmit the most up-to-date global map from the edge server to the mobile device.

\textbf{Global map optimization.}
After selecting $\mathcal{K}_{g,edge}$ in the global map construction, the edge server performs the global map optimization to estimate poses $\tilde{\mathbf{P}}_{n}$ in  $\mathcal{K}_{g,edge}$ and minimize the sum of Mahalanobis norms $\sum\limits_{e \in {{\cal E}_{glob}}}{{{\left( {{\bf{\tilde{x}}}_e} \right)}^\top}
\mathcal{I}  _e
{\bf{\tilde{x}}}_e}$. Specifically, the edge solves \eqref{eq:PGO} when $\mathcal{E}=\mathcal{E}_{glob}$ and $\mathcal{K}=\mathcal{K}_{g,edge}$, i.e., the global map optimization is to solve
\begin{equation}
\label{globalopt}
\mathop {\min }\limits_{{\{\bf{\tilde{P}}}_n\}_{n \in {\mathcal{K}_{g,edge}}}} \sum\limits_{e \in {{\cal E}_{glob}}}{{{\left( {{\bf{\tilde{x}}}_e} \right)}^\top}
\mathcal{I}_e
{\bf{\tilde{x}}}_e}.
\end{equation}

\section{Problem Formulation}
\label{sec:formulation}

AdaptSLAM aims to efficiently select keyframes to construct optimal local and global maps, i.e., we select keyframes in $\mathcal{K}_{loc}$ and $\mathcal{K}_{fixed}$ for the local map and $\mathcal{K}_{g, edge}$ for the global map.  
From \S\ref{systemmodel}, after constructing the optimal local and global maps, the map optimization can be performed using the standard algorithms \cite{kummerle2011g,ceres-solver}.
We construct optimal local and global maps by minimizing the uncertainty of  the keyframes' estimated poses. Hence, we represent and quantify the uncertainty in~\S\ref{sec:Uncertainty Quantification}, and  formulate the uncertainty minimization problems in \S\ref{sec:problemformulation}.

\subsection{Uncertainty Quantification}
\label{sec:Uncertainty Quantification}

Let $\mathbf{p}_n=\tilde{\mathbf{P}}_n-{\mathbf{P}}_n$ denote the pose estimate error of keyframe $n$. The estimated measurement noise can be rewritten as ${\widetilde {\mathbf{x}}_e} = {\mathbf{p}_n} - {\mathbf{p}_m} + {\mathbf{x} _e} =\mathbf{p}_{n,m}+ {\mathbf{x} _e}$, where $\mathbf{p}_{n,m}=\mathbf{p}_{n}-\mathbf{p}_{m}$. We stack all $\mathbf{p}_n, n\in\mathcal{K}$ and get a pose estimate error vector $\mathbf{w}=\left( {{{\bf{p}}^\top_1},{{\bf{p}}^\top_2}, \cdots ,{{\bf{p}}^\top_{\left| \mathcal{K} \right|}}} \right)$. We rewrite the objective function of map optimization in~\eqref{eq:PGO}  as $\sum\limits_{e \in \mathcal{E}} {{{\left( {{{{\bf{\tilde x}}}_e}} \right)}^ \top }{{\cal I}_e}{{{\bf{\tilde x}}}_e}}  = \sum\limits_{e = \left( {\left( {n,m} \right),c} \right) \in {\cal E}} {{\bf{p}}_{n,m}^ \top } {{\cal I}_e}{{\bf{p}}_{n,m}} + 2\sum\limits_{e = \left( {\left( {n,m} \right),c} \right) \in {\cal E}} {{\bf{p}}_{n,m}^ \top } {{\cal I}_e}{{\bf{x}}_e} + \sum\limits_{e \in {\cal E}} {{{\bf{x}}_e^\top}{{\cal I}_e}{{\bf{x}}_e}} $.
If we can rewrite the quadratic term $\sum\limits_{e = \left( {\left( {n,m} \right),c} \right) \in {\cal E}} {{\bf{p}}_{n,m}^ \top } {{\cal I}_e}{{\bf{p}}_{n,m}}$ in the format of  ${\mathbf{w}}{\mathcal{I}_w}{\mathbf{w}^\top}$, where ${\mathcal{I}_w}$ is called the information matrix of the pose graph, the uncertainty of the pose graph is quantified by $-\log \det \left( {{{\cal I}_w}} \right)$ according to the D-optimality~\cite{khosoussi2019reliable,carlone2018attention,chen2021anchor}.\footnote{Common approaches to quantifying uncertainty in SLAM are to use real scalar functions of the maximum likelihood estimator covariance matrix~\cite{rodriguez2018importance}. Among them, D-optimality (determinant of the covariance matrix)~\cite{Placed21Fast,cramer} 
captures the uncertainty due to all the elements of a covariance matrix and has well-known geometrical and information-theoretic interpretations~\cite{pukelsheim2006optimal}.}

We denote the pose estimate error vectors for the global and local maps as  ${{\bf{w}}_g} = \left( {{{\bf{p}}^\top_{{u_1}}},\cdots ,{{\bf{p}}^\top_{{u_{{|\mathcal{K}_{g,edge}|}}}}}} \right)$ and  
$\mathbf{w}_l=\left(  {\mathbf{p}^\top_{r_1}},\cdots,{\mathbf{p}^\top_{r_{\left|\mathcal{K}_{loc}\right|}}}\right)$, where $u_1,\cdots,{{u_{{|\mathcal{K}_{g,edge}|}}}}$ are the keyframes in $\mathcal{K}_{g,edge}$, 
and $r_1,\cdots,r_{\left|\mathcal{K}_{loc}\right|}$ are the keyframes in $\mathcal{K}_{loc}$.
The first pose in the global and local pose graph is known ($\mathbf{p}_{u_1}=0$, $\mathbf{p}_{r_1}=0$).
We rewrite the quadratic terms of the objective functions of global and local map optimizations in~\eqref{globalopt} and~\eqref{localopt}  as
$\sum\limits_{e = \left( {\left( {n,m} \right),c} \right) \in {{\cal E}_{glob}}} {{{ {{\bf{p}}_{n,m}^ \top }  }} {{\cal I}_e} {{{\bf{p}}_{n,m}} }}   =  {{\bf{w}}_g}{{\cal I}_{glob}}\left( {{{\cal K}_{g,edge}}} \right){\bf{w}}_g^ \top $ (or
$\sum\limits_{e = \left( {\left( {n,m} \right),c} \right) \in {{{\cal E}_{loc}} \cup {{\cal E}_{l,f}}}} {{\bf{p}}_{n,m}^ \top } {{\mathcal{I}}_e}{{\bf{p}}_{n,m}}=\mathbf{w}_l {{\cal I}_{loc}\left( {\mathcal{K}_{loc},\mathcal{K}_{fixed}} \right)} \mathbf{w}_l^\top$),
where  $\mathcal{I}_{glob}\left(\mathcal{K}_{g,edge}\right)$ and $ {{\cal I}_{loc}\left( {\mathcal{K}_{loc},\mathcal{K}_{fixed}} \right)}$ are called the information matrices of the global and local maps and will be derived later (in Definition~\ref{def:uncertainty} and Lemmas~\ref{lemma:global} and~\ref{lemma:local}).

\begin{definition}[Uncertainty]
\label{def:uncertainty} 

 The uncertainty of the global (or local) pose graph is defined as $-\log \det \left(
 \tilde{\mathcal{I}}_{glob}  \left(\mathcal{K}_{g,edge}\right)
\right)$ (or $-
\log \det \left( {\tilde{\cal I}_{loc}\left( {\mathcal{K}_{loc},\mathcal{K}_{fixed}} \right)}
\right)$, where $\tilde{\mathcal{I}}_{glob}\left({\mathcal{K}}_{g,edge}\right)$ and $ {\tilde{\cal I}_{loc}\left( {\mathcal{K}_{loc},\mathcal{K}_{fixed}} \right)}$ are obtained by removing the first row and first column in the information matrices $\mathcal{I}_{glob}\left(\mathcal{K}_{g,edge}\right)$ and $ {{\cal I}_{loc}\left( {\mathcal{K}_{loc},\mathcal{K}_{fixed}} \right)}$. 
\end{definition}

From Definition~\ref{def:uncertainty}, the uncertainty quantification is based on the global and local map optimizations introduced in \S\ref{sec:serverFunc} and~\S\ref{sec:userFunc}. After quantifying the uncertainty, we will later (in \S\ref{sec:problemformulation}) optimize the   local and global map construction which in turn minimizes the uncertainty of poses obtained from local and global map optimizations.

\begin{lemma}[Uncertainty of global pose graph]
\label{lemma:global}
For the global map optimization, the uncertainty is calculated as $-\log\det \left( \tilde{\mathcal{I}}_{glob}  \left(\mathcal{K}_{g,edge}\right)  \right)$, where 
$
\label{eq:globalUnc}
 \tilde{\mathcal{I}}_{glob}  \left(\mathcal{K}_{g,edge}\right)  = {\tilde{\mathbf L}_{glob}} \otimes \mathcal{I} 
$
with
$\tilde{\mathbf L}_{glob}$
being the matrix obtained by deleting the first row and column in the Laplacian matrix ${\mathbf L}_{glob}$, and $\otimes$ being the Kronecker product.
The $i,j$-th element of ${\mathbf L}_{glob}$ is given by
\begin{equation}\label{eq:Lglob}\begin{aligned}
\left[{\mathbf L}_{glob}\right]_{i,j}=\left\{ {\begin{array}{*{20}{c}}
 - \sum\limits_{e = (( {u_{{i}}},{u_{{j}}}) ,c)\in {\mathcal{E}_{g,edge}}} {{w_e}},\ \ \ {i\ne j} \\
\sum\limits_{e = (( {u_{{i}}},q),c)\in {\mathcal{E}_{g,edge}},u_i \ne {q} } {{w_e}},\ \ {i=j}
\end{array}} \right..\end{aligned}\end{equation}
\end{lemma}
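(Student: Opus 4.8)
The plan is to identify the quadratic form $\sum_{e=((u_i,u_j),c)\in\mathcal{E}_{glob}}\mathbf{p}_{u_i,u_j}^\top\mathcal{I}_e\mathbf{p}_{u_i,u_j}$ as a \emph{vector-valued} weighted graph Laplacian form, and then to use the isotropy assumption $\mathcal{I}_e=w_e\mathcal{I}$ to separate the purely graph-theoretic structure (the scalar Laplacian $\mathbf{L}_{glob}$) from the per-edge measurement information $\mathcal{I}$ through a single Kronecker factorization.

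First I would fix an arbitrary orientation of each edge and, for every $e=((u_i,u_j),c)\in\mathcal{E}_{glob}$, introduce the signed incidence vector $\mathbf{b}_e=\boldsymbol{\epsilon}_i-\boldsymbol{\epsilon}_j\in\mathds{R}^{|\mathcal{K}_{g,edge}|}$, where $\boldsymbol{\epsilon}_k$ is the $k$-th canonical basis vector. With the stacked error $\mathbf{w}_g^\top=(\mathbf{p}_{u_1}^\top,\dots,\mathbf{p}_{u_{|\mathcal{K}_{g,edge}|}}^\top)^\top$, a direct computation gives $(\mathbf{b}_e^\top\otimes I_6)\,\mathbf{w}_g^\top=\mathbf{p}_{u_i}-\mathbf{p}_{u_j}=\mathbf{p}_{u_i,u_j}$, where $I_6$ is the $6\times 6$ identity (each pose error $\mathbf{p}_n$ being $6$-dimensional). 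The purpose of this step is to write every summand as a quadratic form in the full stacked vector $\mathbf{w}_g^\top$ rather than in the single difference $\mathbf{p}_{u_i,u_j}$.

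Next I would substitute $\mathcal{I}_e=w_e\mathcal{I}$ and invoke the Kronecker mixed-product property $(A\otimes B)(C\otimes D)=(AC)\otimes(BD)$, which yields the factorization $(\mathbf{b}_e\mathbf{b}_e^\top)\otimes\mathcal{I}=(\mathbf{b}_e\otimes I_6)\,\mathcal{I}\,(\mathbf{b}_e^\top\otimes I_6)$. Combined with the previous identity this gives $\mathbf{p}_{u_i,u_j}^\top\mathcal{I}_e\mathbf{p}_{u_i,u_j}=\mathbf{w}_g\big(w_e\,(\mathbf{b}_e\mathbf{b}_e^\top)\otimes\mathcal{I}\big)\mathbf{w}_g^\top$, and summing over all edges while using linearity of $\otimes$ in its first argument produces $\sum_{e}\mathbf{p}_{u_i,u_j}^\top\mathcal{I}_e\mathbf{p}_{u_i,u_j}=\mathbf{w}_g\big(\mathbf{L}_{glob}\otimes\mathcal{I}\big)\mathbf{w}_g^\top$ with $\mathbf{L}_{glob}=\sum_e w_e\,\mathbf{b}_e\mathbf{b}_e^\top$. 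A one-line inspection of this outer-product sum recovers \eqref{eq:Lglob}: the $(i,i)$ entry collects $w_e$ over every edge incident to $u_i$, while the $(i,j)$ entry with $i\neq j$ collects $-w_e$ over every edge joining $u_i$ and $u_j$. This identifies $\mathbf{L}_{glob}\otimes\mathcal{I}$ with the information matrix $\mathcal{I}_{glob}(\mathcal{K}_{g,edge})$.

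Finally, since the global map is anchored on its first node, $\mathbf{p}_{u_1}=\mathbf{0}$, so the first $6$-dimensional block of $\mathbf{w}_g^\top$ is identically zero and the first block row and column of $\mathbf{L}_{glob}\otimes\mathcal{I}$ never enter the quadratic form; deleting them (as in Definition~\ref{def:uncertainty}) leaves exactly $\tilde{\mathbf{L}}_{glob}\otimes\mathcal{I}$, which is therefore $\tilde{\mathcal{I}}_{glob}(\mathcal{K}_{g,edge})$, and the claimed uncertainty $-\log\det(\tilde{\mathbf{L}}_{glob}\otimes\mathcal{I})$ follows at once from the D-optimality criterion. I expect the only genuine obstacle to be the careful bookkeeping of the block/Kronecker structure — in particular, making explicit that the isotropy $\mathcal{I}_e=w_e\mathcal{I}$ is precisely what decouples the scalar Laplacian from the shared per-pose information $\mathcal{I}$; once the mixed-product factorization is in hand, the entry matching against \eqref{eq:Lglob} and the anchor reduction are routine.
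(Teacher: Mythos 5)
Your proposal is correct and follows essentially the same route as the paper's proof: both express each edge's contribution $\mathbf{p}_{u_i,u_j}^\top\mathcal{I}_e\mathbf{p}_{u_i,u_j}$ as a quadratic form in the stacked error vector $\mathbf{w}_g$ and sum over edges to obtain the weighted Laplacian tensored with $\mathcal{I}$, then remove the anchored first block. Your $w_e(\mathbf{b}_e\mathbf{b}_e^\top)\otimes\mathcal{I}$ is exactly the paper's per-edge block matrix $\Xi_e$ written in incidence-vector notation, so the two arguments differ only in bookkeeping.
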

\begin{proof}
See Appendix~\ref{proof:global}. Proof sketch:
The proof follows from  
the global map optimization formulation in \S\ref{sec:serverFunc}  and the definition of  $\tilde{\mathcal{I}}_{glob}\left(\mathcal{K}_{g,edge}\right)$.
\end{proof}

From Lemma~\ref{lemma:global}, the uncertainty of the global pose graph can be calculated based on the reduced Laplacian matrix (${\tilde{\mathbf{L}}}_{glob}$). According to the relationship between the reduced Laplacian matrix and the tree structure \cite{khosoussi2014novel}, the uncertainty is inversely proportional to  the logarithm of weighted number of spanning trees in the global pose graph. Similar conclusions are drawn for 2D pose graphs~\cite{khosoussi2019reliable} and 3D pose graphs with only covisibility edges~\cite{cramer,Placed21Fast}, where the device can move in 2D plane and 3D space respectively. We extend the results to VI-SLAM where the global pose graph is a multigraph with the possibility of having  both a covisibility edge and an IMU edge between two  keyframes.

\begin{lemma}[Uncertainty of local pose graph]
\label{lemma:local}
 The uncertainty is $-\log \det \left( \tilde{\cal I}_{loc}\left( {\mathcal{K}_{loc},\mathcal{K}_{fixed}} \right)\right)$ for the local map, where 
$
\label{eq:sigmaloc}
 \tilde{\cal I}_{loc}\left( {\mathcal{K}_{loc},\mathcal{K}_{fixed}} \right)= {\tilde{\mathbf L}_{loc}} \otimes \mathcal{I}
$
with $\tilde{\mathbf L}_{loc}$
being the matrix obtained by deleting the first row and the first 
column 
in ${\mathbf L}_{loc}$.
The $i,j$-th element of ${\mathbf L}_{loc}$ (of size $\left|\mathcal{K}_{loc}\right|\times \left|\mathcal{K}_{loc}\right|$) is given by
\begin{equation}
\label{eq:Lloc}\begin{aligned}
[{\mathbf L}_{loc}]_{i,j}=\left\{ {\begin{array}{*{20}{c}}
 - \sum\limits_{e = \left( {\left( {{r_{{i}}},{r_{j}}} \right),c} \right)\in {\mathcal{E}_{loc}}} {{w_e}}  ,\;\;\;\;\;\;\;\;\;\;{i} \ne {j}\\
{{\sum\limits_{e = \left( {\left( {{r_{{i}}},{q}} \right),c} \right)\in{\mathcal{E}_{l,f}} \cup {\mathcal{E}_{loc}},q \ne r_{i}} {{w_e}} } ,\;{i} = {j}}
\end{array}} \right..\end{aligned}\end{equation}
\end{lemma}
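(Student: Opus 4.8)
The plan is to parallel the derivation of the global case in Lemma~\ref{lemma:global}, with the essential new ingredient being the treatment of the fixed keyframes in $\mathcal{K}_{fixed}$. Since these keyframes serve as anchors with known poses in the local optimization \eqref{localopt}, their pose estimate errors are held at $\mathbf{p}_f = \mathbf{0}$ for all $f \in \mathcal{K}_{fixed}$, so only the errors $\{\mathbf{p}_n\}_{n \in \mathcal{K}_{loc}}$ collected in $\mathbf{w}_l$ are free variables. My first step is to take the quadratic term of the local objective, $\sum_{e = ((n,m),c) \in \mathcal{E}_{loc} \cup \mathcal{E}_{l,f}} \mathbf{p}_{n,m}^\top \mathcal{I}_e \mathbf{p}_{n,m}$, substitute $\mathcal{I}_e = w_e \mathcal{I}$, and split the sum over the two edge classes $\mathcal{E}_{loc}$ and $\mathcal{E}_{l,f}$.

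For edges $e \in \mathcal{E}_{loc}$ with endpoints $r_i, r_j \in \mathcal{K}_{loc}$, each term $w_e (\mathbf{p}_{r_i} - \mathbf{p}_{r_j})^\top \mathcal{I} (\mathbf{p}_{r_i} - \mathbf{p}_{r_j})$ contributes the familiar graph-Laplacian pattern: $+w_e$ to the $(i,i)$ and $(j,j)$ diagonal blocks and $-w_e$ to the $(i,j)$ and $(j,i)$ off-diagonal blocks, each block scaled by $\mathcal{I}$. For edges $e \in \mathcal{E}_{l,f}$ with local endpoint $r_i$ and fixed endpoint $f$, setting $\mathbf{p}_f = \mathbf{0}$ collapses the term to $w_e \mathbf{p}_{r_i}^\top \mathcal{I} \mathbf{p}_{r_i}$, which adds $+w_e$ to the $(i,i)$ diagonal block only, with no off-diagonal coupling. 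Assembling these contributions shows the quadratic form equals $\mathbf{w}_l (\mathbf{L}_{loc} \otimes \mathcal{I}) \mathbf{w}_l^\top$, where the off-diagonal entries of $\mathbf{L}_{loc}$ accumulate only the $\mathcal{E}_{loc}$ weights while the diagonal entries accumulate both the $\mathcal{E}_{loc}$ and the grounding $\mathcal{E}_{l,f}$ weights incident to $r_i$ --- exactly the two cases in \eqref{eq:Lloc}.

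The final step is to impose the anchoring $\mathbf{p}_{r_1} = \mathbf{0}$: deleting the first block row and column of $\mathbf{L}_{loc} \otimes \mathcal{I}$ yields $\tilde{\mathbf{L}}_{loc} \otimes \mathcal{I} = \tilde{\mathcal{I}}_{loc}(\mathcal{K}_{loc}, \mathcal{K}_{fixed})$, and the claimed uncertainty $-\log\det(\tilde{\mathcal{I}}_{loc}(\mathcal{K}_{loc}, \mathcal{K}_{fixed}))$ then follows directly from Definition~\ref{def:uncertainty}.

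I expect the main subtlety --- rather than a deep obstacle --- to be justifying that the $\mathcal{E}_{l,f}$ edges act purely as diagonal grounding terms. Unlike the global pose graph of Lemma~\ref{lemma:global}, where every non-anchor node is free and one obtains a standard reduced Laplacian, here the fixed keyframes break the usual row-sum-zero property: each edge to a fixed node adds to the diagonal without a matching off-diagonal entry, so $\mathbf{L}_{loc}$ is a grounded (Dirichlet) Laplacian rather than a combinatorial one. Care is needed to confirm that this grounded matrix remains positive definite after removing the anchor row and column, so that $\det(\tilde{\mathcal{I}}_{loc})$ is well defined and the D-optimality uncertainty is finite; this holds provided every local keyframe is connected to the anchored or fixed set, which the construction guarantees.
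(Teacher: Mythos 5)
Your proposal is correct and follows essentially the same route as the paper's own proof: split the quadratic term over $\mathcal{E}_{loc}$ and $\mathcal{E}_{l,f}$, use $\mathbf{p}_f=\mathbf{0}$ for fixed keyframes to reduce the cross edges to diagonal (grounding) contributions, assemble the resulting block matrix $\mathbf{L}_{loc}\otimes\mathcal{I}$, and remove the first block row and column for the anchor. Your closing remark on the grounded-Laplacian structure and positive definiteness is a reasonable extra justification that the paper leaves implicit, but it does not change the argument.
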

\begin{proof}
 See Appendix~\ref{proof:local}.
Proof sketch: Setting $\mathbf{p}_n=0$, $n\in\mathcal{K}_{fixed}$ (the fixed keyframes have poses with `zero uncertainty'), the proof follows from  the local pose graph optimization formulation in \S\ref{sec:userFunc} and the definition of ${\tilde{\cal I}_{loc}\left( {\mathcal{K}_{loc},\mathcal{K}_{fixed}} \right)}$.
\end{proof}

From Lemma~\ref{lemma:local}, the uncertainty of the local map is proportional to the uncertainty of the pose graph $\mathcal{G}$  anchoring on
the first node in $\mathcal{K}_{loc}$ and all nodes in $\mathcal{K}_{fixed}$, where $\mathcal{G}$'s node set is $\mathcal{K}_{fixed}\cup\mathcal{K}_{loc}$ and  edge set includes all measurements between any two nodes in $\mathcal{K}_{fixed}\cup\mathcal{K}_{loc}$. Note that keyframe poses in $\mathcal{K}_{fixed}$ are optimized on the edge server and transmitted to the mobile device, and they are considered as constants in the local pose graph optimization. From the uncertainty's perspective, adding fixed keyframes in $\mathcal{K}_{fixed}$ is equivalent to anchoring these keyframe poses (i.e., deleting rows and  columns corresponding to the anchored nodes in the Laplacian matrix of graph $\mathcal{G}$). In addition, from Lemma~\ref{lemma:local}, although poses are fixed, the anchored nodes still reduce the uncertainty of the pose graph. Hence, apart from $\mathcal{K}_{loc}$, we will select the anchored keyframe set $\mathcal{K}_{fixed}$ to minimize the uncertainty.

\subsection{Uncertainty Minimization Problems}
\label{sec:problemformulation}
We now formulate optimization problems whose objectives are to minimize the uncertainty of the  local and global maps. For the local map optimization, under the computation resource constraints, we solve Problem~\ref{prb1} for each keyframe~$k$. 
For the global map optimization, under the communication resource constraints, we solve Problem~\ref{prb2} to adaptively offload keyframes to the edge server.

\begin{problem}[Local map construction]
\label{prb1}
\begin{align}
&\mathop {\max }\limits_{{{\cal K}_{loc}},{{\cal K}_{fixed}}} \log \det \left( {{{\widetilde I}_{loc}}\left( {{{\cal K}_{loc}} \cup \left\{ k \right\},{{\cal K}_{fixed}}} \right)} \right)\\
&\textrm{s.t.}\ \ 
 \label{c2local}\left| {\mathcal{K}_{loc}} \right|\leqslant {l_{loc}}, \mathcal{K}_{loc} \subseteq \mathcal{K}\setminus\mathcal{K}_{g,user}\\&
\label{c3local}\left| {\mathcal{K}_{fixed}} \right|\leqslant {l_f},\mathcal{K}_{fixed}\subseteq \mathcal{K}_{g,user}.
\end{align}
\end{problem}
\noindent The objective of Problem~\ref{prb1} is equivalent to minimizing the uncertainty of the local map. Constraint~\eqref{c2local}  means that the size of ${\mathcal{K}_{loc}} $ is constrained to reduce the computational complexity in the local map optimization, and that the keyframes to be optimized in the local map are selected from keyframes that are not in $\mathcal{K}_{g,user}$. Constraint~\eqref{c3local} means that the size of  ${\mathcal{K}_{fixed}} $ is constrained, and that the fixed keyframes are selected from $\mathcal{K}_{g,user}$ that were previously optimized on and transmitted from the edge server.

\begin{problem}[Global map construction]
\label{prb2}
\begin{align}
&\mathop {\max }\limits_{{\cal K}^\prime \subseteq {\cal K} \setminus {{\cal K}_{g,edge}}} \log \det \left( {{{\widetilde {\cal I}}_{glob}}\left( {{{\cal K}_{g,edge}}\cup {\mathcal{K}^\prime}} \right)} \right)\\
&\textrm{s.t.}\ \ 
\label{c1global} d\left| {{\mathcal{K}^\prime }} \right|\leqslant{D}.
\end{align}
\end{problem}
\noindent The objective of Problem~\ref{prb2} is equivalent to  minimizing the uncertainty of the global map. ${\mathcal{K}}\setminus\mathcal{K}_{g,edge}$ is set of the keyframes that have not been offloaded to the server, and we select a subset of keyframes, $\mathcal{K}^\prime$, from ${\mathcal{K}}\setminus\mathcal{K}_{g,edge}$. The constraint~\eqref{c1global} guarantees that the keyframes cannot be offloaded from the device to the server at a higher bitrate than the available channel capacity, where $D$ is the channel capacity constraint representing the maximum number of bits that can be transmitted in a given transmission window.   
We assume that the data size $d$ of each keyframe is the same, which is based on the observation that the data size is relatively consistent across keyframes in popular public SLAM
datasets~\cite{burri2016euroc,TUMdataset}.

\section{Local and Global Map Construction}
\label{sec:ARproblem1}

\label{sec:ARproblem}

We analyze the properties of approximate submodularity in map construction problems, and
propose low-complexity algorithms to efficiently construct local and global maps.  

\subsection{Local Map Construction}
\label{sec:LocalProblem}
 The keyframes in the local map include those in two disjoint sets ${\mathcal{K}_{loc}} $ and ${\mathcal{K}_{fixed}}$. To efficiently solve Problem~\ref{prb1}, we decompose it into two problems aiming at minimizing the uncertainty: Problem~\ref{prb3} that selects keyframes in ${\mathcal{K}_{loc}} $ and Problem~\ref{prb4}
that selects keyframes in ${\mathcal{K}_{fixed}}$. We obtain the optimal local keyframe set ${\mathcal{K}_{loc}^{\star}}$ in Problem~\ref{prb3}. Based on ${\mathcal{K}_{loc}^{\star}}$, we then obtain the optimal fixed keyframe set ${\mathcal{K}_{fixed}^{\star}}$ in Problem~\ref{prb4}.
We will compare the solutions to Problems~\ref{prb3} and~\ref{prb4} with the optimal solution to Problem~\ref{prb1} in \S\ref{sec:evaluation} to show that the performance loss induced by the decomposition is small.
\begin{problem}
\label{prb3}
\begin{align}
&\nonumber {\mathcal{K}_{loc}^{\star}}=\arg  \mathop {\max }\limits_{\mathcal{K}_{loc}} \log \det \left( {\tilde{\cal I}_{loc}{{\left( {\mathcal{K}_{loc}{ \cup \left\{ k \right\}},\emptyset } \right)}}} \right)\\
&\textrm{s.t.}\ \ \nonumber
\eqref{c2local}.
\end{align}
\end{problem}

\begin{problem}
\label{prb4}
\begin{align}
&\nonumber {\mathcal{K}_{fixed}^{\star}}=\arg\mathop {\max }\limits_{\mathcal{K}_{fixed}} \log \det \left( {\tilde{\mathcal I}_{loc}{{\left( {\mathcal{K}_{loc}^{\star}{ \cup \left\{ k \right\}},\mathcal{K}_{fixed}} \right)}}} \right)\\
&\textrm{s.t.}\ \ 
\nonumber
\eqref{c3local}.
\end{align}
\end{problem}

\subsubsection{The Selection of Local Keyframe Set $\mathcal{K}_{loc}$}
\label{sec:optimizedKeyframe}
We first solve Problem~\ref{prb3}. It is a  nonsubmodular optimization problem with constraints,  which are NP-hard and generally difficult to be solved with an approximation ratio \cite{bian2017guarantees}. Hence, we  decompose Problem~\ref{prb3} into  subproblems (Problems~\ref{subproblem1} and~\ref{subproblem2}) that are equivalent to the original Problem~\ref{prb3} and can be approximately solved with a low-complexity algorithm.

In problem~\ref{subproblem1}, assume that we already select a keyframe subset  $\mathcal{K}_{base}$ from $\mathcal{K}\setminus\mathcal{K}_{g,user}$ (with the size $l_{b}\triangleq \left|\mathcal{K}_{base}\right|\leqslant l_{loc}$), and we aim to further select a keyframe set $\mathcal{K}_{add}$ to be added to $\mathcal{K}_{base}$ to minimize the local map uncertainty. Rewriting the objective as $\mathsf{Unc}\left(\mathcal{K}_{add}\cup {\mathcal{K}_{base}}\cup\{k\}\right )\triangleq-\log \det \left( {{{\widetilde {\cal I}}_{loc}}\left(\mathcal{K}_{add}\cup {\mathcal{K}_{base}}\cup\{k\},\emptyset \right)}\right)$, the problem is to obtain the optimal $\mathcal{K}_{add}$ (denoted as $\mathsf{OPT}_{add}(\mathcal{K}_{base})$) given $\mathcal{K}_{base}$:

\begin{problem}
\label{subproblem1}
\begin{align}
\nonumber\label{sub_obj1}&\mathsf{OPT}_{add}\left( {{{\cal K}_{base}}} \right) = \arg \mathop {\max }\limits_{{{\cal K}_{add}}} -\mathsf{Unc}\left(\mathcal{K}_{add}\cup {\mathcal{K}_{base}}\cup\{k\}\right )\\
&\nonumber\textrm{s.t.} \ \ \left|\mathcal{K}_{add}\right|\leqslant l_{loc}-l_b.
\end{align}
\end{problem}

After getting the solutions (i.e., $\mathsf{OPT}_{add}\left( {{{\cal K}_{base}}} \right)$) to Problem~\ref{subproblem1} for all possible $\mathcal{K}_{base}$ of size $l_b$, we obtain the optimal $\mathcal{K}_{base}$ (denoted as $\mathcal{K}_{base}^\star$) in Problem~\ref{subproblem2}.
\begin{problem}
\label{subproblem2}
\begin{align}
&\nonumber{\cal K}_{base}^\star= \arg\mathop {\max }\limits_{{{\mathcal K}_{base}}} -\mathsf{Unc}\left(\mathsf{OPT}_{add}(\mathcal{K}_{base})\cup\mathcal{K}_{base} \cup \{k\}\right)\\
&\nonumber\textrm{s.t.} \ \ \left|\mathcal{K}_{base}\right|= l_b.
\end{align}
\end{problem}

\begin{lemma}
\label{lemma:equiproblem}
Given $l_b$, the solution to Problems~\ref{subproblem1} and~\ref{subproblem2}, i.e., ${\cal K}_{base}^\star$ and $\mathsf{OPT}_{add}\left( {{{\cal K}_{base}^\star}} \right)$, will give us the solution ${\mathcal{K}_{loc}^{\star}}$ to Problem~\ref{prb3}. Specifically, ${\mathcal{K}_{loc}^{\star}}={\cal K}_{base}^\star \cup \mathsf{OPT}_{add}\left( {{{\cal K}_{base}^\star}} \right)$.
\end{lemma}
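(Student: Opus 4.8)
The plan is to establish the claimed identity by showing that Problem~\ref{prb3} and the two-stage procedure of Problems~\ref{subproblem1} and~\ref{subproblem2} attain the same optimal objective value, and that this common value is realized by the set $\mathcal{K}_{base}^\star\cup\mathsf{OPT}_{add}(\mathcal{K}_{base}^\star)$. The structural fact that drives everything is that the objective $-\mathsf{Unc}(\cdot)$ in all three problems depends only on the \emph{union} of the chosen keyframes (together with $\{k\}$), never on how that union is labeled as a ``base'' part and an ``add'' part. Hence the two-stage scheme is merely a reparametrization of a size-$\le l_{loc}$ subset of $\mathcal{K}\setminus\mathcal{K}_{g,user}$ as a disjoint size-$l_b$ base plus a size-$\le(l_{loc}-l_b)$ increment, and the lemma amounts to checking that this reparametrization is exhaustive over the feasible region of Problem~\ref{prb3}.

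Write $g(\mathcal{K}_{loc})\triangleq-\mathsf{Unc}(\mathcal{K}_{loc}\cup\{k\})$ for the shared objective, let $\mathrm{OPT}_3$ denote the optimal value of Problem~\ref{prb3}, and let $\mathrm{OPT}_2\triangleq g\big(\mathcal{K}_{base}^\star\cup\mathsf{OPT}_{add}(\mathcal{K}_{base}^\star)\big)$ be the value produced by the two stages. I would first prove $\mathrm{OPT}_2\le\mathrm{OPT}_3$: for every base $\mathcal{K}_{base}\subseteq\mathcal{K}\setminus\mathcal{K}_{g,user}$ with $|\mathcal{K}_{base}|=l_b$, the set $\mathcal{K}_{base}\cup\mathsf{OPT}_{add}(\mathcal{K}_{base})$ has cardinality at most $l_b+(l_{loc}-l_b)=l_{loc}$ and still lies in $\mathcal{K}\setminus\mathcal{K}_{g,user}$, so it is feasible for Problem~\ref{prb3} and its objective is at most $\mathrm{OPT}_3$; maximizing over $\mathcal{K}_{base}$ and invoking the optimality of $\mathcal{K}_{base}^\star$ in Problem~\ref{subproblem2} yields the bound.

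For the reverse inequality $\mathrm{OPT}_3\le\mathrm{OPT}_2$, I would take an optimizer $\mathcal{K}_{loc}^\star$ of Problem~\ref{prb3}, carve out an arbitrary size-$l_b$ subset $\mathcal{K}_{base}'$ and set $\mathcal{K}_{add}'\triangleq\mathcal{K}_{loc}^\star\setminus\mathcal{K}_{base}'$, which is disjoint from $\mathcal{K}_{base}'$ and has size at most $l_{loc}-l_b$, hence is feasible for Problem~\ref{subproblem1} with base $\mathcal{K}_{base}'$. The optimality of $\mathsf{OPT}_{add}(\mathcal{K}_{base}')$ then gives $g\big(\mathcal{K}_{base}'\cup\mathsf{OPT}_{add}(\mathcal{K}_{base}')\big)\ge g(\mathcal{K}_{base}'\cup\mathcal{K}_{add}')=g(\mathcal{K}_{loc}^\star)=\mathrm{OPT}_3$, and since the left-hand side is one admissible value in Problem~\ref{subproblem2}, it is at most $\mathrm{OPT}_2$. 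Combining the two inequalities forces $\mathrm{OPT}_2=\mathrm{OPT}_3$, so $\mathcal{K}_{base}^\star\cup\mathsf{OPT}_{add}(\mathcal{K}_{base}^\star)$ is itself an optimizer of Problem~\ref{prb3}, which is the assertion $\mathcal{K}_{loc}^\star=\mathcal{K}_{base}^\star\cup\mathsf{OPT}_{add}(\mathcal{K}_{base}^\star)$.

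The step I expect to be the main obstacle is the well-definedness of the split in the reverse direction, which presupposes that some optimizer $\mathcal{K}_{loc}^\star$ satisfies $|\mathcal{K}_{loc}^\star|\ge l_b$ so that a size-$l_b$ base can be extracted. I would secure this from the monotonicity of $-\mathsf{Unc}(\cdot)$, i.e., from the fact noted after Lemma~\ref{lemma:global} that adding keyframes only increases the weighted spanning-tree count of the pose graph and therefore never increases the uncertainty: monotonicity guarantees an optimizer that exhausts the budget, $|\mathcal{K}_{loc}^\star|=l_{loc}\ge l_b$, provided at least $l_{loc}$ candidate keyframes are available in $\mathcal{K}\setminus\mathcal{K}_{g,user}$. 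Finally, as with any argmax identity the conclusion should be read up to ties in the objective, namely that $\mathcal{K}_{base}^\star\cup\mathsf{OPT}_{add}(\mathcal{K}_{base}^\star)$ is \emph{a} maximizer of Problem~\ref{prb3}.
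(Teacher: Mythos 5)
Your argument is correct and is precisely the ``straightforward'' argument the paper has in mind --- the paper omits the proof entirely, and your two inequalities (the two-stage union is always feasible for Problem~\ref{prb3}; any optimizer of Problem~\ref{prb3} can be split into a feasible base--increment pair) establish exactly the claimed equivalence. Your added care about $|\mathcal{K}_{loc}^{\star}|\geqslant l_b$ via monotonicity and the up-to-ties reading of the argmax goes slightly beyond what the paper records, but does not change the route.
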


\begin{proof}
The proof is straightforward and hence omitted.
\end{proof}

We can obtain ${\mathcal{K}_{loc}^{\star}}$ in Problem~\ref{prb3} by solving Problems~\ref{subproblem1} and~\ref{subproblem2}. We will show that the objective function of Problem~\ref{subproblem1} 
is `close to' a submodular function when the size of the keyframe set $\mathcal{K}_{base}$ is large. In this case, Problem~\ref{subproblem1} can be efficiently solved using a greedy algorithm with an approximation  ratio. When  $\left|\mathcal{K}_{base}\right|$ is small, 
 we need to compare the objective function for different combinations of $\mathcal{K}_{base}$ and $\mathcal{K}_{add}$.

\begin{lemma}
\label{bound}
When  $\frac{{{w_{\max }}}}{{\left| {{{\cal K}_{base}}} \right|{w_{\min }}}}<1$, the submodularity ratio $\gamma$ of the objective function in Problem~\ref{subproblem1} is lower bounded by 
\begin{equation}\gamma \geqslant 1 + \frac{1}{\vartheta }\log \left( {1 - \frac{{4{{\left| {{{\mathcal K}_{add}}} \right|}^2}{w^2_{{{\max }}}}}}{{\left| {{{\cal K}_{base}}} \right|{w_{\min }} - {w_{\max }}}}} \right),
\end{equation}
where 
$\vartheta  = \mathop {\min }\limits_{{m} \in  {{\mathcal{K}_{add}}} } \sum\limits_{{n} \in {{\mathcal{K}_{base}}} } {\log {w_{{n},{m}}}}$, 
${w_{\max }} = \mathop {\max }\limits_{{n}  ,{m} \in  {{{\cal K}_{base}} \cup {{\cal K}_{add}}} } {w_{{n},{m}}}$, and
${w_{\min }} = \mathop {\min }\limits_{{n},{m} \in  {{{\cal K}_{base}} \cup {{\cal K}_{add}}} } {w_{{n},{m}}}.$
$\gamma$ is close to 1 when $ \left| {{{\cal K}_{base}}} \right|$ is significantly larger than $\left| {{{\cal K}_{add}}} \right|$.
\end{lemma}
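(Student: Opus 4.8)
The plan is to reduce the objective of Problem~\ref{subproblem1} to a log-determinant of a reduced Laplacian, express the two marginal gains appearing in the submodularity ratio~\eqref{eq:ratio} through a Schur-complement (matrix-determinant-lemma) identity, and then exploit the fact that a large base set $\mathcal{K}_{base}$ makes the underlying matrices strongly diagonally dominant, so that the extra nodes in $S$ perturb the marginal gain only slightly. By Lemma~\ref{lemma:local} and the Kronecker identity $\det(\tilde{\mathbf L}_{loc}\otimes\mathcal I)=(\det\tilde{\mathbf L}_{loc})^{q}(\det\mathcal I)^{p}$ (with $q=\dim\mathcal I$ and $p$ the size of $\tilde{\mathbf L}_{loc}$), the objective $-\mathsf{Unc}(\cdot)$ equals $q\log\det\tilde{\mathbf L}_{loc}$ plus a term affine in the number of selected keyframes; by the weighted matrix-tree theorem $\det\tilde{\mathbf L}_{loc}$ is the weighted spanning-tree count of the subgraph induced by the chosen nodes together with $\mathcal{K}_{base}\cup\{k\}$. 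I therefore set $F(T)\triangleq\log\det\tilde{\mathbf L}_{loc}(T\cup\mathcal K_{base}\cup\{k\})$, which is monotone increasing (adding a node cannot decrease the spanning-tree count), and study its marginal gains.

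Next I would write each marginal gain in closed form. Placing the newly added node $x$ last, the augmented reduced Laplacian has a block structure in which $\tilde{\mathbf L}_A$ for the current node set $A$ is perturbed on its diagonal by the incident weights $w_{x,n}$ and bordered by the weight vector $\mathbf w_x$; expanding the determinant gives $F(A\cup\{x\})-F(A)=\log\!\big(d_x-\mathbf w_x^\top(\tilde{\mathbf L}_A+\mathrm{diag}(\mathbf w_x))^{-1}\mathbf w_x\big)+\log\frac{\det(\tilde{\mathbf L}_A+\mathrm{diag}(\mathbf w_x))}{\det\tilde{\mathbf L}_A}$, where $d_x=\sum_n w_{x,n}$. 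The numerator of $\gamma$ is this quantity for $A=\mathcal K_{base}\cup L\cup\{k\}$ and the denominator is the same for the enlarged set $A'=A\cup S$, which differs from $A$ in at most $|S|\le|\mathcal K_{add}|$ nodes, each contributing off-diagonal weight at most $w_{\max}$.

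The crux is a perturbation estimate showing that enlarging the base from $A$ to $A'$ multiplies the Schur-complement factor by no less than $(1-\epsilon)$ with $\epsilon=\tfrac{4|\mathcal K_{add}|^2 w_{\max}^2}{|\mathcal K_{base}|w_{\min}-w_{\max}}$. The hypothesis $\tfrac{w_{\max}}{|\mathcal K_{base}|w_{\min}}<1$ guarantees $|\mathcal K_{base}|w_{\min}-w_{\max}>0$, makes every intermediate matrix strictly diagonally dominant, and bounds the smallest eigenvalue of $\tilde{\mathbf L}_A$ (and $\tilde{\mathbf L}_{A'}$) below by $|\mathcal K_{base}|w_{\min}-w_{\max}$; the $|S|$ extra nodes inject off-diagonal weight of total magnitude $O(|\mathcal K_{add}|w_{\max})$. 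Feeding these spectral bounds into the resolvent $(\tilde{\mathbf L}_A+\mathrm{diag}(\mathbf w_x))^{-1}$ and into the determinant ratio controls the difference $\mathrm{den}-\mathrm{num}$, while $\vartheta=\min_{m}\sum_{n\in\mathcal K_{base}}\log w_{n,m}$ enters as the scale of the log-determinant that the dense base-to-added-node connections contribute to the marginal gain. Normalizing the violation bound by $\vartheta$ then yields $\gamma=\min\frac{\mathrm{num}}{\mathrm{den}}\ge 1+\tfrac{1}{\vartheta}\log(1-\epsilon)$; since $\epsilon\to0$ and $\vartheta\to\infty$ as $|\mathcal K_{base}|\to\infty$, the bound tends to $1$, confirming that the objective becomes asymptotically submodular for a large base.

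I expect the main obstacle to be this last perturbation-and-normalization step: tracking how the at-most-$|\mathcal K_{add}|$ additional nodes in $S$ propagate through both the quadratic-form term $\mathbf w_x^\top(\cdot)^{-1}\mathbf w_x$ and the determinant-ratio term of the marginal gain, and proving that their combined effect is a clean multiplicative factor $(1-\epsilon)$ uniformly over all admissible $L,S,x$. Equally delicate is justifying that $\vartheta$ is the correct normalizer, i.e.\ relating the sum-of-log-weights in $\vartheta$ to the actual scale of the denominator marginal gain; this is where the diagonal-dominance hypothesis $\tfrac{w_{\max}}{|\mathcal K_{base}|w_{\min}}<1$ is indispensable, since it keeps all the Schur complements positive and their logarithms well defined.
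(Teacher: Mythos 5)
Your overall strategy coincides with the paper's: both lower-bound the denominator of the submodularity ratio by $\vartheta$ via the tree-connectivity (weighted matrix-tree) interpretation of $\det\tilde{\mathbf L}$, both reduce the remaining work to showing that passing from the base set $A$ to $A\cup S$ degrades the marginal gain by at most a multiplicative factor $1-\epsilon$ with $\epsilon=\tfrac{4|\mathcal K_{add}|^2w_{\max}^2}{|\mathcal K_{base}|w_{\min}-w_{\max}}$, and both invoke diagonal dominance from the hypothesis $\tfrac{w_{\max}}{|\mathcal K_{base}|w_{\min}}<1$. The paper carries this out by grouping the edge contributions into matrices $\mathbf Q_1,\dots,\mathbf Q_4$, reducing the determinant ratio to $1-\mathbf g_1\mathbf Q^{-1}\mathbf g_1^\top$ for a perturbation vector $\mathbf g_1$ supported on only the $|S|+1$ coordinates associated with $S\cup\{x\}$, and then bounding $\mathbf Q^{-1}$ \emph{entrywise} by $\tfrac{1}{|\mathcal K_{base}|w_{\min}-w_{\max}}$ through a Neumann series for $(\mathbf D_Q-\mathbf E_Q)^{-1}$.

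The genuine gap is in the step you yourself flag as the crux. You propose to control the resolvent by the spectral bound $\lambda_{\min}(\tilde{\mathbf L}_A)\geq|\mathcal K_{base}|w_{\min}-w_{\max}$, but this is false for grounded Laplacians of dense graphs: for the complete graph $K_n$ with unit weights the reduced Laplacian is $n\mathbf I_{n-1}-\mathbf J_{n-1}$, whose smallest eigenvalue is $1$, not $n-1$; in general the bottom of the spectrum is of order $w_{\max}$ (it is tied to the connection to the anchored node), and adding $\mathrm{diag}(\mathbf w_x)$ shifts it by at most another $O(w_{\max})$. Consequently a purely spectral estimate of the quadratic form gives only $\|\mathbf g\|_2^2/\lambda_{\min}=O(|\mathcal K_{add}|\,w_{\max}^2)$, which does \emph{not} decay as $|\mathcal K_{base}|\to\infty$ and cannot produce the stated $\epsilon$. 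What saves the argument in the paper is that the inverse is small \emph{entrywise} (for $K_n$, $\|\tilde{\mathbf L}^{-1}\|_{\max}=2/n$ even though $\|\tilde{\mathbf L}^{-1}\|_2=1$, because the near-kernel eigenvector is delocalized), and the perturbation vector has only $O(|\mathcal K_{add}|)$ nonzero entries each of magnitude at most $w_{\max}$, so $\mathbf g_1\mathbf Q^{-1}\mathbf g_1^\top\leq\|\mathbf g_1\|_1^2\,\|\mathbf Q^{-1}\|_{\max}\leq\tfrac{(2|\mathcal K_{add}|w_{\max})^2}{|\mathcal K_{base}|w_{\min}-w_{\max}}$. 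You would need to replace your eigenvalue claim with this entrywise Neumann-series bound (or an equivalent argument exploiting that $\mathbf g_1$ is sparse and sums to zero) for the proof to go through; as written, the perturbation step fails.
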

\begin{proof} See Appendix~\ref{proof:bound}. Proof sketch:
Following from the definition of $\gamma$ in~\eqref{eq:ratio},
we first prove that the denominator  in~\eqref{eq:ratio}, denoted as $\log\det(\mathbf{M}_{den})$, is lower bounded by $\vartheta$. Denoting the numerator in~\eqref{eq:ratio} as $\log\det(\mathbf{M}_{num})$,
 we show that $\log \det ({{\bf{M}}_{num}}) \geqslant \log\det(\mathbf{M}_{den})  + \log \left( {1 - \frac{{4{{\left| {{{\cal K}_{add}}} \right|}^2}w_{\max }^2}}{{\left| {{{\cal K}_{base}}} \right|{w_{\min }} - {w_{\max }}}}} \right)$, by proving that the absolute values of all elements in $\mathbf{M}_{num}$ are bounded. 
\end{proof}

 From Lemma~\ref{bound}, the objective function in Problem~\ref{subproblem1} is `close to' a submodular function when  the size of the existing keyframe set (i.e., $\left|\mathcal{K}_{base}\right|$) is much larger than $\left|\mathcal{K}_{add}\right|$. 
 Hence, 
 we can use the greedy algorithm 
 to approximately solve Problem~\ref{subproblem1}. According to Theorem~\ref{theorem:submax},  the 
solution obtained by the 
greedy algorithm for Problem~\ref{subproblem1}, denoted by $\mathsf{OPT}_{add}^\#\left( {{{\cal K}_{base}}} \right)$, has an approximation guarantee that $\mathsf{OPT}_{add}^\#\left( {{{\cal K}_{base}}} \right)\geqslant \left( {1 - \exp ( - \gamma )} \right)\mathsf{OPT}_{add}\left( {{{\cal K}_{base}}} \right)$.
 
According to the analysis of the properties of Problems~\ref{subproblem1} and~\ref{subproblem2}, we now solve Problem~\ref{prb3} to select the local keyframe set $\mathcal{K}_{loc}$ using Algorithm~\ref{alg_opt} (top-$h$ greedy-based algorithm). $\Theta$ is the set of possible keyframe sets that minimize the local map uncertainty, and we only maintain $h$ keyframe sets to save the computation resources. $\Lambda, \Lambda\in\Theta$, denotes the element in $\Theta$ and represents one possible keyframe set. When the size of $\Lambda$ is smaller than a threshold $l_{thr}$ ($|\Lambda|\leqslant l_{thr}$), we select the top-$H$ ($H>1$) highest-scoring combinations of  $\Lambda$ and $n,n\in{{\mathcal{K}} \setminus  {\mathcal{K}_{g,user}} }$, that minimize $\mathsf{Unc}\left( {\Lambda}\cup{\left\{ k,n\right\}} \right)$. 
 When $\left|\Lambda\right|$ gets larger, we only select the highest-scoring combination. The reasons are as follows.  $\Lambda$ can be seen as the existing keyframe set $\mathcal{K}_{base}$. According to Lemma~\ref{bound}, when the size of the existing keyframe set (which is $\left|\Lambda\right|$ here) is small, there is no guarantee that $\mathsf{Unc}\left(\mathcal{K}_{add}\cup{\mathcal{K}_{base}}\cup\{k\}\right)$ is close to a submodular function (i.e., the submodularity ratio is much smaller than~1). Hence, we need to try different combinations of  $\Lambda$ and $n$ to search for the combination that minimizes the uncertainty after each iteration. As $\left|\Lambda\right|$ grows, the submodularity ratio  is close to~1, and 
  a greedy algorithm can achieve $\eta$ approximation ($\eta=1-\exp(-\gamma)$, $\gamma\to1$). In this case, we apply the greedy algorithm and only keep the combination that achieves the minimal uncertainty at each step.

 \begin{algorithm}[t]
 \caption{Selecting local keyframe set $\mathcal{K}_{loc}$ in the local map (top-$h$ greedy-based algorithm)}
 \begin{algorithmic}[1]
  \label{alg_opt}
\STATE $\Theta \leftarrow\emptyset$; 
 \WHILE{( $|\Lambda|\leqslant l_{loc}$)}
 \STATE \algorithmicif { $\text{ }|\Lambda| \leqslant l_{thr}$} \algorithmicthen   $\text{ }h\leftarrow H$ \algorithmicelse   $\text{ }h\leftarrow 1$;
\STATE \label{line8} Select the top-$h$ highest-scoring combinations of $\Lambda, \Lambda\in \Theta$ and $n,n\in{{\mathcal{K}} \setminus  {\mathcal{K}_{g,user}} }$ that minimize $\mathsf{Unc}\left( {\Lambda}\cup{\left\{ n,k\right\}} \right)$. $\mathsf{Unc}\left( {\Lambda}\cup{\left\{ n,k\right\}} \right)$ is calculated using the computation reuse algorithm  in Algorithm~\ref{alg_opt};
\STATE Update $\Theta$ as the set of $h$ highest-scoring combinations of $\Lambda$ and $n$. Each element of $\Theta$ is a set (i.e., $\Lambda\cup\left\{n\right\}$) corresponding to one combination;
  \ENDWHILE
\STATE  ${\cal K}_{loc}^ \star  \leftarrow \arg \mathop {\min }\limits_{\Lambda  \in \Theta } \mathsf{Unc}(\Lambda\cup\{k\}) $.
 \end{algorithmic} 
 \end{algorithm}

 \subsubsection{Computation Reuse Algorithm}
\label{sec:computereuse}

\begin{algorithm}[t]
 \caption{Computation reuse algorithm}
 \begin{algorithmic}[1]
  \label{alg_computation_reuse}
 \STATE  \textbf{Input:} $\det(\mathbf{A})$, $\mathbf{A}^{-1}$; 
 \STATE $\mathbf{B}\leftarrow{{{\bf{A}}^{ - 1}}}$. Calculate ${{\mathbf{B}_i}\mathbf{B}_i^\top}$, $i=1,\cdots,|\Lambda|$;
 \STATE Calculate ${\left( {{{\bf{A}}^\prime }} \right)^{ - 1}}$ using \eqref{eq:Ainverse}. Calculate $\det \left( {{{\bf{A}}^\prime }}\right)$ using \eqref{eq:detA}.
  Calculate $\det(\tilde{\mathcal{I}}\left( {\Lambda  \cup \{ n,k\} } \right))$ using \eqref{eq:detFinal}.
 \end{algorithmic} 
 \end{algorithm}
\vspace{-0em}

We use the computation reuse algorithm (Algorithm~\ref{alg_computation_reuse}) to speed up Algorithm~\ref{alg_opt}. We observe that for different $n, n\in \mathcal{K}\setminus\mathcal{K}_{g,user}$, only a limited number ($3|\Lambda|+1$) of elements in the matrix $\tilde{\mathcal{I}}\left( {\Lambda  \cup \{ n,k\} } \right)$ are different. Calculating the log-determinant function of a $(|\Lambda|+1)\times(|\Lambda|+1)$ matrix $\tilde{\mathcal{I}}\left( {\Lambda  \cup \{ n,k\} } \right)$ has a high computational complexity (of $\mathcal{O}(|\Lambda|+1)^3$)~\cite{strang2006linear}.
Hence, instead of computing the objective function
for each $n$ 
from scratch, we reuse parts of computation results for different $n$.

Letting $\mathbf{A}\triangleq\tilde{\mathcal{I}}\left( {\Lambda  \cup \{ k\} } \right)$ denote the information matrix of the local map in the $|\Lambda|$-th iteration (of Algorithm~\ref{alg_opt}), the information matrix in the $(|\Lambda|+1)$-th iteration is 
$\tilde{\mathcal{I}}\left( {\Lambda  \cup \{ n,k\} } \right) =\left[ {\begin{array}{*{20}{c}}
{\mathbf{A} + \rm{diag}\left( \mathbf{a} \right)}&\mathbf{a^\top}\\
{{\mathbf{a}}}&d
\end{array}} \right]$, where $\mathbf{a} = ({a_1},{a_2}, \cdots ,{a_{|\Lambda|}})$ with $a_i=w_{\lambda_i,n}$, $\lambda_i$ is the $i$-th element of $\Lambda$, and $d = {w_{k,n}} + \sum\limits_{i = 1}^{\left| \Lambda  \right|} {{a_i}}  $.

We aim to calculate $\det(\tilde{\mathcal{I}}\left( {\Lambda  \cup \{ n,k\} } \right))$ using the calculation of $\det(\mathbf{A})$ and ${\mathbf{A}}^{-1}$ from the previous iteration. Letting $\mathbf{A}^{\prime}\triangleq \mathbf{A}+\rm{diag}(\mathbf{a})$, $\det(\tilde{\mathcal{I}}\left( {\Lambda  \cup \{ n,k\} } \right))$ is calculated by \begin{equation}
\label{eq:detFinal}
\det(\tilde{\mathcal{I}}\left( {\Lambda  \cup \{ n,k\} } \right))=(d-\mathbf{a}(\mathbf{A^{\prime}})^{-1}\mathbf{a}^\top)\det(\mathbf{A}^{\prime}).
\end{equation}
Next we efficiently calculate  $(\mathbf{A^{\prime}})^{-1}$ and $\det(\mathbf{A}^{\prime})$ to get $\det(\tilde{\mathcal{I}}\left( {\Lambda  \cup \{ n,k\} } \right))$. We can rewrite  $\mathbf{A}^\prime$ as $\mathbf{A}^\prime=\mathbf{A}+\sum\limits_{i = 1}^{\left| \Lambda  \right|} {\beta _i^ \top } \beta _i$ where $\beta _i= \left( {0, \cdots ,\underbrace {{\sqrt{a_i}}}_{i{\rm{-th}}}, \cdots ,0} \right)$. According to Sherman–Morrison formula \cite{golub2013matrix}, $\left( {{{\bf{A}}^\prime }} \right)^{-1}$ is given by
\begin{equation}
\label{eq:Ainverse}
{\left( {{{\bf{A}}^\prime }} \right)^{ - 1}} \approx \underbrace {\mathbf{B}}_{{\rm{Reuse}}} - \sum\limits_{i = 1}^{\left| \Lambda  \right|} {\frac{{a_i}}{{1 + a_i{\mathbf{B}_{i,i}}}}\underbrace {{\mathbf{B}_i}\mathbf{B}_i^\top}_{{\rm{Reuse}}}},  
\end{equation}
where  $\mathbf{B}={{{\bf{A}}^{ - 1}}}$, $\mathbf{B}_{i,i}$ is the $i,i$-th element of $\mathbf{B}$, and $\mathbf{B}_{i}$ is the $i$-th column vector of $\mathbf{B}$. Using \eqref{eq:Ainverse}, $\mathbf{B}$ and ${{\mathbf{B}_i}\mathbf{B}_i^\top}$ can be computed
only once to be used for different $n, n\in \mathcal{K}\setminus \mathcal{K}_{g,user}$, which greatly reduces
the computational cost. According to the rank-1 update of determinant \cite{golub2013matrix}, $\det(\mathbf{A}^\prime)$ can be written as
\begin{equation}
\label{eq:detA}
\begin{aligned}
\det \left( {{{\bf{A}}^\prime }} \right) =&
\det \left( {\bf{A}} \right) \left( {1 + a_1{\bf{B}_{1,1}}} \right) \{\mathds{1}(\left| \Lambda  \right|=1)+\mathds{1}(\left| \Lambda  \right|>1)\\ \times &
\left.\prod\limits_{i = 2}^{\left| \Lambda  \right|} {\left( {1 + a_i\left[ {{\bf{B}} - \sum\limits_{j = 1}^{i - 1} {\frac{{a_j{{\bf{B}}_j}{\bf{B}}_j^T}}{{1 + a_j{{\bf{B}}_{j,j}}}}}} \right]_{i,i}} \right)}\right\}.
\end{aligned}
\end{equation}
$\left( {{\bf{B}} - \sum\limits_{j = 1}^{i - 1} {\frac{{a_j{{\bf{B}}_j}{\bf{B}}_j^T}}{{1 + a_j{{\bf{B}}_{j,j}}}}}} \right)$ is already calculated in \eqref{eq:Ainverse}, which reduces the computational complexity. Substituting~\eqref{eq:Ainverse} and~\eqref{eq:detA} into~\eqref{eq:detFinal}, we get the final results of $\det(\tilde{\mathcal{I}}\left( {\Lambda  \cup \{ n,k\} } \right))$.

\textbf{The computation complexity of different algorithms.} If we select keyframes in $\mathcal{K}_{loc}$ using a brute-force algorithm based on exhaustive enumeration of combinations of keyframes in $\mathcal{K}_{loc}$, the complexity is $\mathcal{O}\left({\rho \choose l_{loc}} l_{loc}^3\right)$, where $\rho =\left|\mathcal{K}\setminus\mathcal{K}_{g,user}\right|$ is the number of keyframes that have not been offloaded to the edge server. 
Without computation reuse, the computation complexity of the proposed top-$h$ greedy-based algorithm is $\mathcal{O}(H\rho l_{loc}^4)$. 
With computation reuse, it is reduced  to $\mathcal{O}(Hl_{loc}^4)+\mathcal{O}(H\rho l_{loc}^3)$. Since we only keep $l_{loc}$ keyframes in $\mathcal{K}_{loc}$ of the local map and a small $H$ in Algorithm~\ref{alg_opt} to save computation resources, i.e., $\rho \gg l_{loc}>H$,  the proposed greedy-based algorithm with computation reuse significantly reduces the computational complexity.

\subsubsection{The Selection of Fixed Keyframe Set $\mathcal{K}_{fixed}$}
\label{sec:fixedKeyframe}

 After selecting the local keyframe set $\mathcal{K}_{loc}$ by solving Problem~\ref{prb3}, we solve Problem~\ref{prb4} to select the fixed keyframe set. 
\begin{lemma}
\label{fixed_lemma}
Problem~\ref{prb4} is non-negative, monotone and submodular with a cardinality-fixed constraint. 
\end{lemma}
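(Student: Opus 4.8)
The plan is to recognize Problem~\ref{prb4} as the maximization of a set function of the canonical form $S \mapsto \log\det\left(M_0 + \sum_{q \in S} X_q\right)$ with a fixed positive-definite $M_0$ and positive-semidefinite increments $X_q$, a form whose $\log\det$ is classically monotone and submodular. Here the local set $\mathcal{K}_{loc}^{\star}\cup\{k\}$ is frozen and the only variable is $S \coloneqq \mathcal{K}_{fixed}$; write $f(S) = \log\det\left(\tilde{\mathcal I}_{loc}\left(\mathcal{K}_{loc}^{\star}\cup\{k\},S\right)\right)$.

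\textbf{Step 1 (reduction to a matrix function).} By Lemma~\ref{lemma:local}, $f(S) = \log\det\left(\tilde{\mathbf L}_{loc}(S)\otimes\mathcal{I}\right)$. Inspecting~\eqref{eq:Lloc}, the off-diagonal entries of $\mathbf L_{loc}$ involve only edges of $\mathcal{E}_{loc}$ and hence are independent of $S$, whereas each diagonal entry additionally collects the weights of the $\mathcal{E}_{l,f}$ edges running from $r_i$ to the fixed keyframes in $S$. I would therefore decompose $\tilde{\mathbf L}_{loc}(S) = \tilde{\mathbf L}_0 + \tilde{\mathbf D}(S)$, where $\tilde{\mathbf L}_0$ is the reduced Laplacian of the $\mathcal{E}_{loc}$-subgraph (constant in $S$) and $\tilde{\mathbf D}(S) = \sum_{q\in S}\tilde{\mathbf D}(\{q\})$ is a sum of diagonal matrices with non-negative entries. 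Distributing the Kronecker product gives $\tilde{\mathcal I}_{loc}(S) = M_0 + \sum_{q\in S}X_q$ with $M_0 = \tilde{\mathbf L}_0\otimes\mathcal{I}$ and $X_q = \tilde{\mathbf D}(\{q\})\otimes\mathcal{I}$. Since the local map from Problem~\ref{prb3} is connected, $\tilde{\mathbf L}_0 \succ 0$, and as $\mathcal{I}\succ 0$ we obtain $M_0 \succ 0$ and $X_q \succeq 0$.

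\textbf{Step 2 (monotonicity and submodularity).} I would exploit the integral representation of the marginal gain: for $A \succ 0$ and $X \succeq 0$,
\[
\log\det(A+X) - \log\det(A) = \int_0^1 \operatorname{tr}\left((A+tX)^{-1}X\right)dt \geq 0,
\]
which gives monotonicity at once. For submodularity, take $S\subseteq T$ and $q\notin T$, and set $A = M_0 + \sum_{i\in S}X_i \preceq M_0 + \sum_{i\in T}X_i = B$. Operator-antitonicity of matrix inversion yields $(A+tX_q)^{-1} \succeq (B+tX_q)^{-1}$ for every $t\in[0,1]$, and since $\operatorname{tr}(C\,X_q)\geq 0$ whenever $C\succeq 0$, the integrand for $A$ dominates the one for $B$. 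Integrating gives the diminishing-returns inequality $f(S\cup\{q\}) - f(S) \geq f(T\cup\{q\}) - f(T)$.

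\textbf{Step 3 (non-negativity) and main obstacle.} Factoring $\det\left(\tilde{\mathbf L}_{loc}(S)\otimes\mathcal{I}\right) = \det\left(\tilde{\mathbf L}_{loc}(S)\right)^{c_1}\det(\mathcal{I})^{c_2}$ with constants $c_1,c_2>0$ independent of $S$, and invoking the matrix-tree interpretation of the reduced-Laplacian determinant (a weighted count of spanning structures whose edge weights all satisfy $w_e\geq 1$), one gets $\det\left(\tilde{\mathbf L}_{loc}(S)\right)\geq 1$; the residual additive constant from $\det(\mathcal{I})^{c_2}$ is absorbed by subtracting $f(\emptyset)$, a shift that preserves monotonicity and submodularity and makes $f$ non-negative. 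The technical heart is the submodularity step, which rests on the antitonicity $A\preceq B \Rightarrow (A+tX_q)^{-1}\succeq (B+tX_q)^{-1}$; the other delicate point is guaranteeing $M_0\succ 0$, i.e., that anchoring on $r_1$ alone leaves $\tilde{\mathbf L}_0$ nonsingular, which I would discharge via connectivity of the local pose graph returned by Problem~\ref{prb3}.
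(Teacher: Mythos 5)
Your proof is correct, but it takes a genuinely different route from the paper's. The paper's (very terse) argument establishes submodularity through the determinant-ratio inequality $\det\bigl(\tilde{\mathcal I}_{loc}(\cdot,L)\bigr)\det\bigl(\tilde{\mathcal I}_{loc}(\cdot,S)\bigr)\geq\det\bigl(\tilde{\mathcal I}_{loc}(\cdot,L\cup S)\bigr)\det\bigl(\tilde{\mathcal I}_{loc}(\cdot,\emptyset)\bigr)$, invoking only the monotonicity of $\det$ under the positive-semidefinite ordering --- a Koteljanskii-type inequality whose derivation from that single fact is left implicit. You instead prove the equivalent diminishing-returns form directly, via the integral representation $\log\det(A+X)-\log\det(A)=\int_0^1\operatorname{tr}\bigl((A+tX)^{-1}X\bigr)\,dt$ combined with operator-antitonicity of inversion; each step there is a standard one-liner, so your chain of reasoning is arguably more self-contained than the paper's sketch. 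You also supply three things the paper glosses over: the explicit modular decomposition $\tilde{\mathbf L}_{loc}(S)=\tilde{\mathbf L}_0+\sum_{q\in S}\tilde{\mathbf D}(\{q\})$ read off from~\eqref{eq:Lloc} (correctly observing that only the diagonal depends on $\mathcal{K}_{fixed}$, so the increments are PSD diagonal matrices), the normalization $f(S)-f(\emptyset)$ needed to make non-negativity literally true for Theorem~\ref{theorem:submax}, and the connectivity hypothesis guaranteeing $\tilde{\mathbf L}_0\succ 0$ so that the log-determinants are finite --- a hypothesis the paper implicitly assumes as well. The only caveat is cosmetic: the lemma as stated asserts non-negativity of the objective itself, whereas your argument (correctly) delivers it only after the constant shift; since that shift changes neither the maximizer nor the greedy guarantee, this is how the claim should be read in both treatments.
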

\begin{proof}[Proof sketch]
 It is straightforward to prove the non-negativity and monotonicity. For the submodularity, we can prove that $\frac{{\det \left( {{{\widetilde {\cal I}}_{loc}}\left( {{\cal K}_{loc}^ \star  \cup \left\{ k \right\},L} \right)} \right)\det \left( {{{\widetilde {\cal I}}_{loc}}\left( {{\cal K}_{loc}^ \star  \cup \left\{ k \right\},S} \right)} \right)}}{{\det \left( {{{\widetilde {\cal I}}_{loc}}\left( {{\cal K}_{loc}^ \star  \cup \left\{ k \right\},L \cup S} \right)} \right)\det \left( {{{\widetilde {\cal I}}_{loc}}\left( {{\cal K}_{loc}^ \star  \cup \left\{ k \right\},\emptyset } \right)} \right)}} \geqslant 1$, using the property that   $\det (\mathbf{M})\geqslant\det(\mathbf{N})$ holds for positive semidefinite matrices $\mathbf{M}$, $\mathbf{N}$ when $\mathbf{M}-\mathbf{N}$ is positive semidefinite.
\end{proof}

Lemma~\ref{fixed_lemma} indicates that the problem can be approximately solved with greedy methods in Algorithm~\ref{alg:greedy}~\cite{nemhauser1978analysis}.  For each iteration, the algorithm selects one keyframe  from $\mathcal{K}_{g,user}$ to be added to the fixed keyframe set $\mathcal{K}_{fixed}$.  
The
approximation ratio $\eta  = 1 - \exp(-1)$ guarantees that worst-case performance of a greedy algorithm cannot be far from optimal.

\subsection{Global Map Construction}
We use a low-complexity algorithm to solve Problem~\ref{prb2} to construct the global map. 
The objective function of Problem~\ref{prb2} can be rewritten as $-\mathsf{Unc}\left(\mathcal{K}_{g,edge}\cup\mathcal{K}^\prime\right)$, which has the  same structure as that of Problem~\ref{prb3}.  
Problems~\ref{prb2} and \ref{prb3} both add keyframes to the existing keyframe sets 
to construct a pose graph and optimize the keyframe poses in the pose graph. Hence, Algorithms~\ref{alg_opt} and~\ref{alg_computation_reuse} can be used to solve Problem~\ref{prb2}. In Algorithm~\ref{alg_opt}, $l_{loc}$ is replaced by $\frac{{{D}}}{d}$, and $\mathcal{K}\setminus\mathcal{K}_{g,user}$ is replaced by $\mathcal{K}\setminus\mathcal{K}_{g,edge}$. Calculating the uncertainty of a large global map is computationally intensive, and hence the proposed low-complexity algorithm is essential to reducing the computational load on the mobile device.

\label{sec:ARproblem2}

\section{Evaluation}
\label{sec:evaluation}

\begin{figure*}[t]
\begin{minipage}[t]{0.7\linewidth}
\vspace{-3.6cm}
\begin{subfigure}{.495\textwidth}
  \centering
   \includegraphics[width=\linewidth]{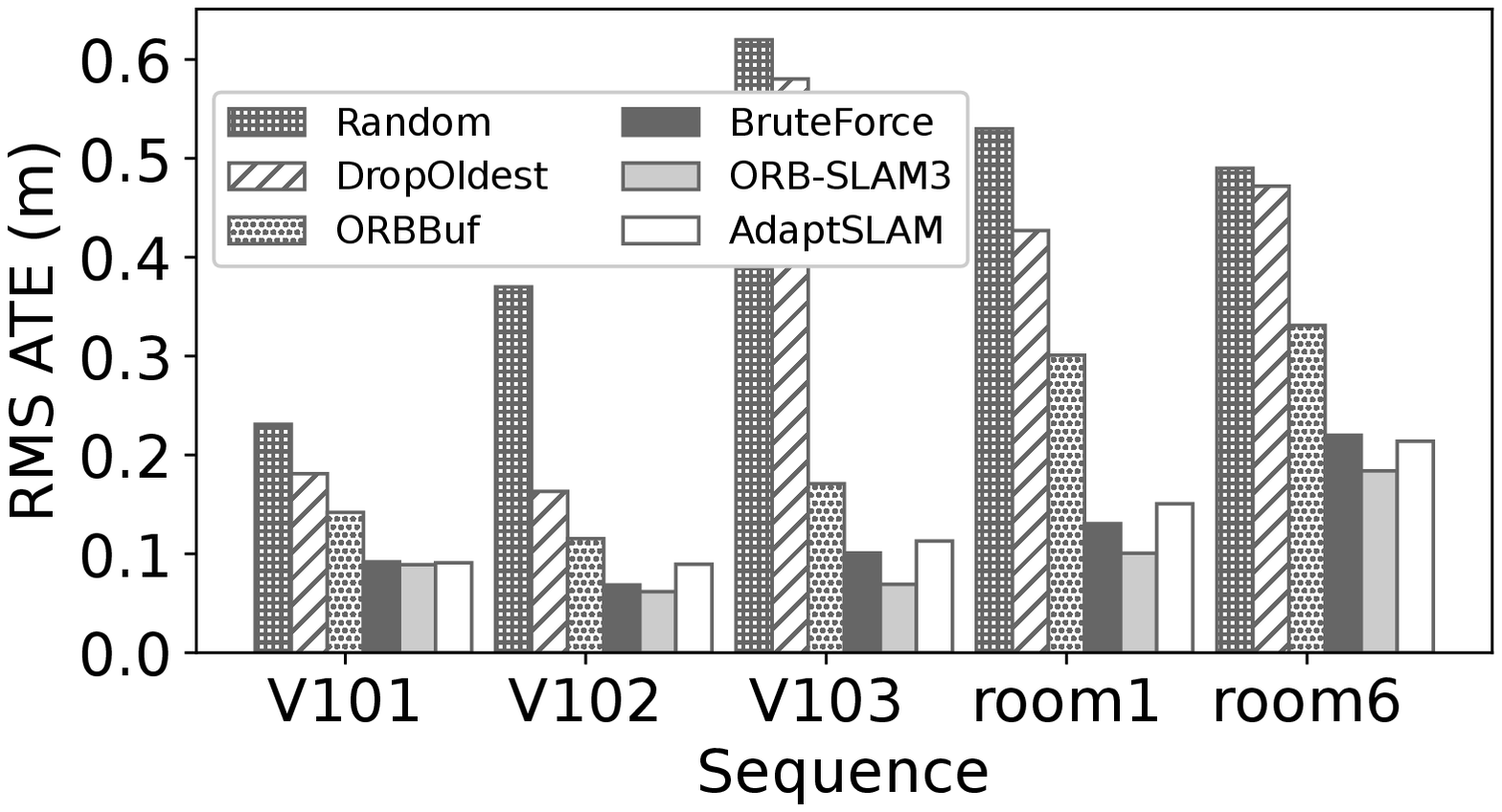}   
   \vspace{-0.6cm}
  \caption{V-SLAM}
  \label{VLocal}
\end{subfigure}
\begin{subfigure}{.495\textwidth}
  \centering
   \includegraphics[width=\linewidth]{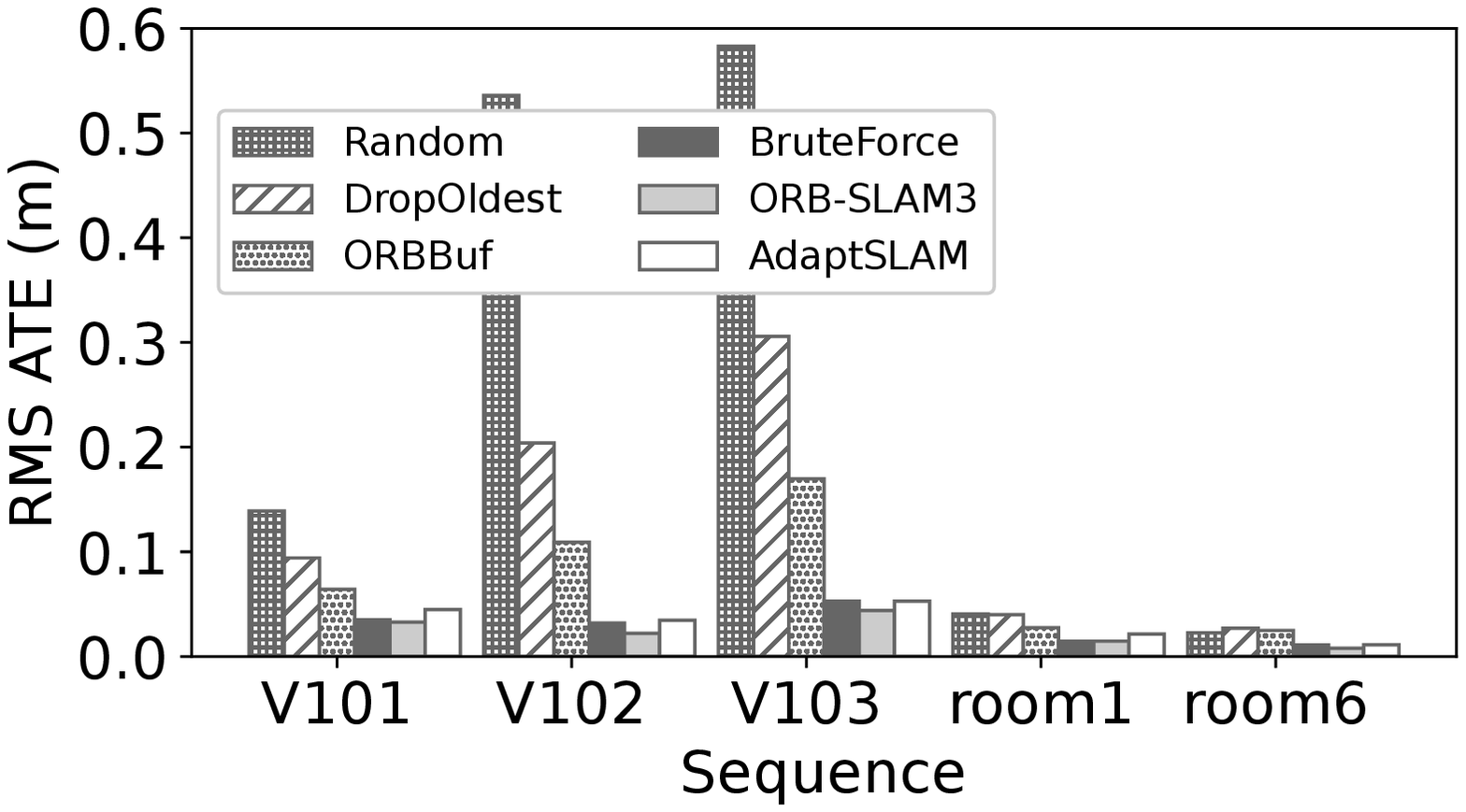}   
   \vspace{-0.6cm}
  \caption{VI-SLAM}
  \label{VILocal}
\end{subfigure}
\vspace{-0.15cm}
  \caption{RMS ATE for 6 keyframe selection methods in the local map construction for 5 sequences in EuRoC and TUM.
  }
     \label{Fig:Local}
\end{minipage}
\hspace{0.1cm}
\begin{minipage}[t]{0.29\linewidth}
\centering
   \includegraphics[width=0.9\textwidth]{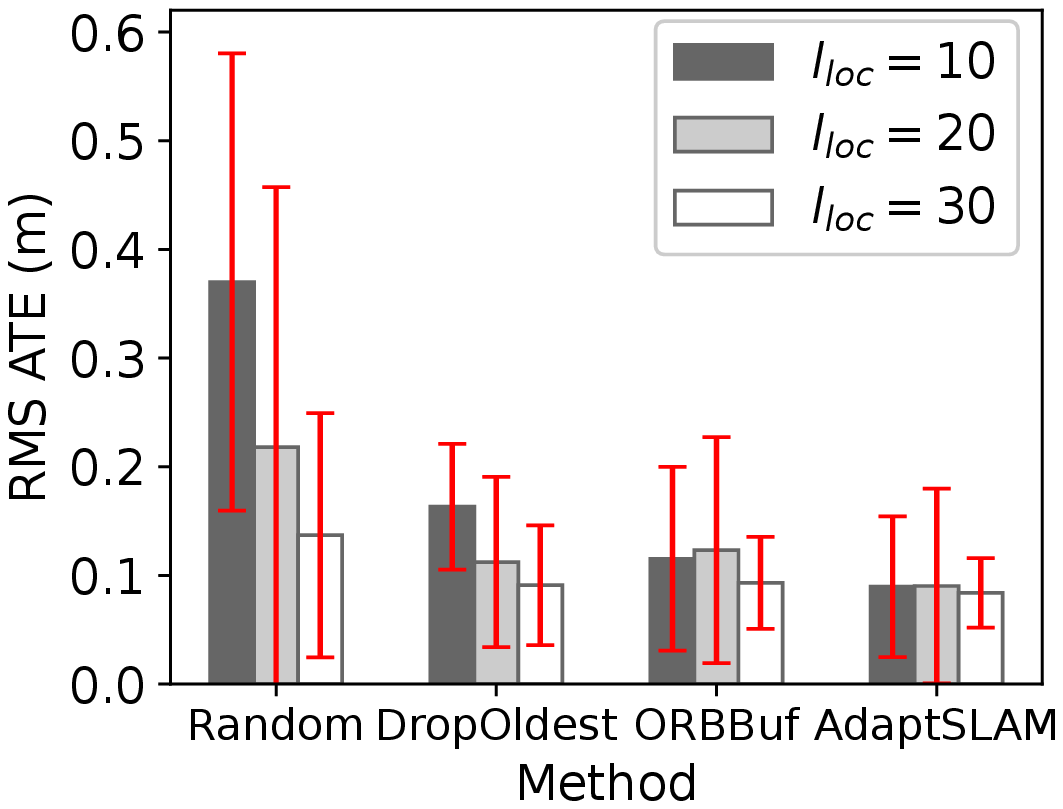}
      \vspace{-0.2cm}
  \caption{RMS ATE for different sizes of local keyframe set (for EuRoC V102). }
  \label{localsize}
\end{minipage}
\vspace{-0.5cm}
\end{figure*}

\begin{figure*}[hbt!]
\begin{minipage}[t]{0.7\linewidth}
\vspace{-3.6cm}
\begin{subfigure}{.495\textwidth}
  \centering
   \includegraphics[width=\linewidth]{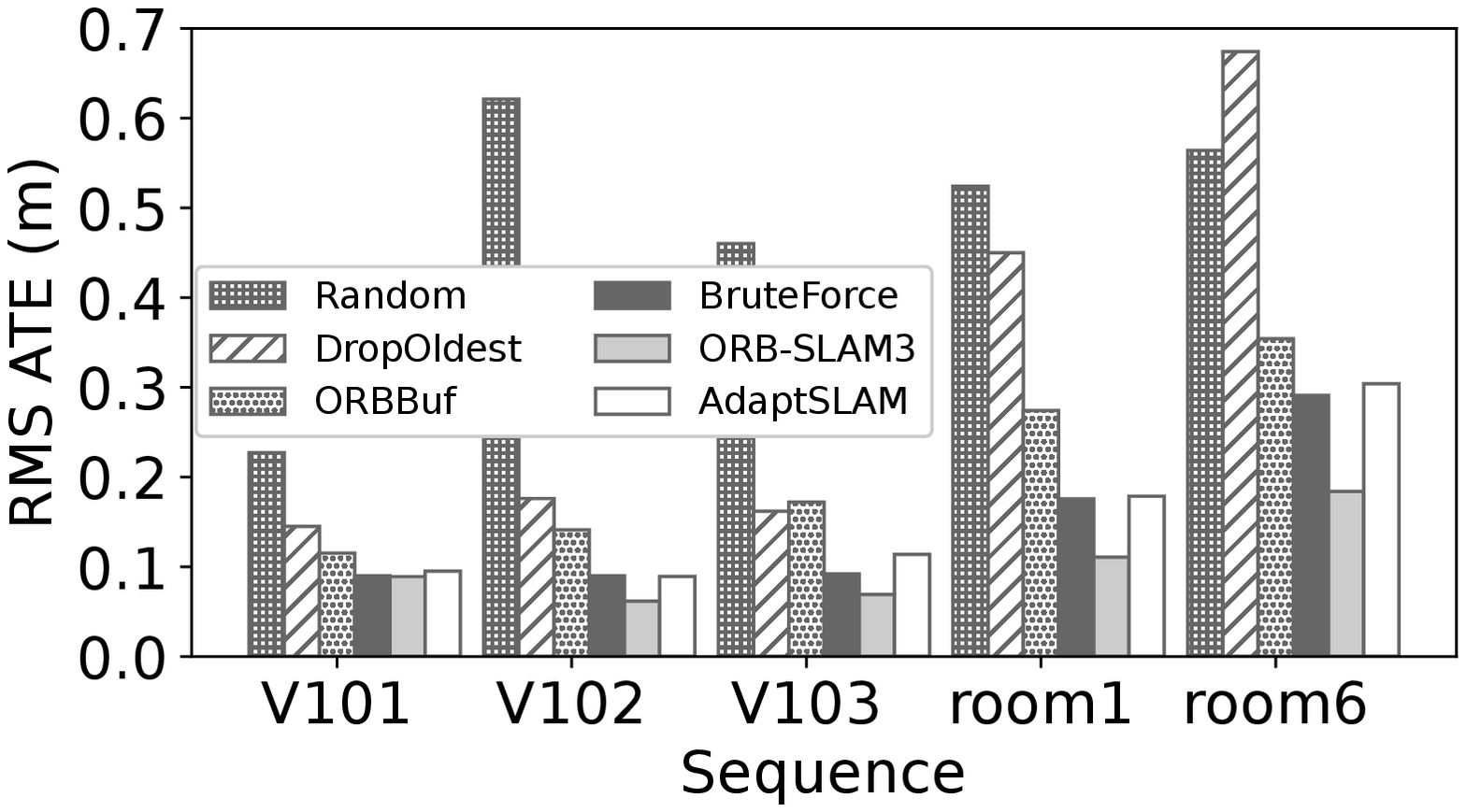}   
   \vspace{-0.6cm}
  \caption{V-SLAM}
  \label{VGlobal}
\end{subfigure}
\begin{subfigure}{.495\textwidth}
  \centering
   \includegraphics[width=\linewidth]{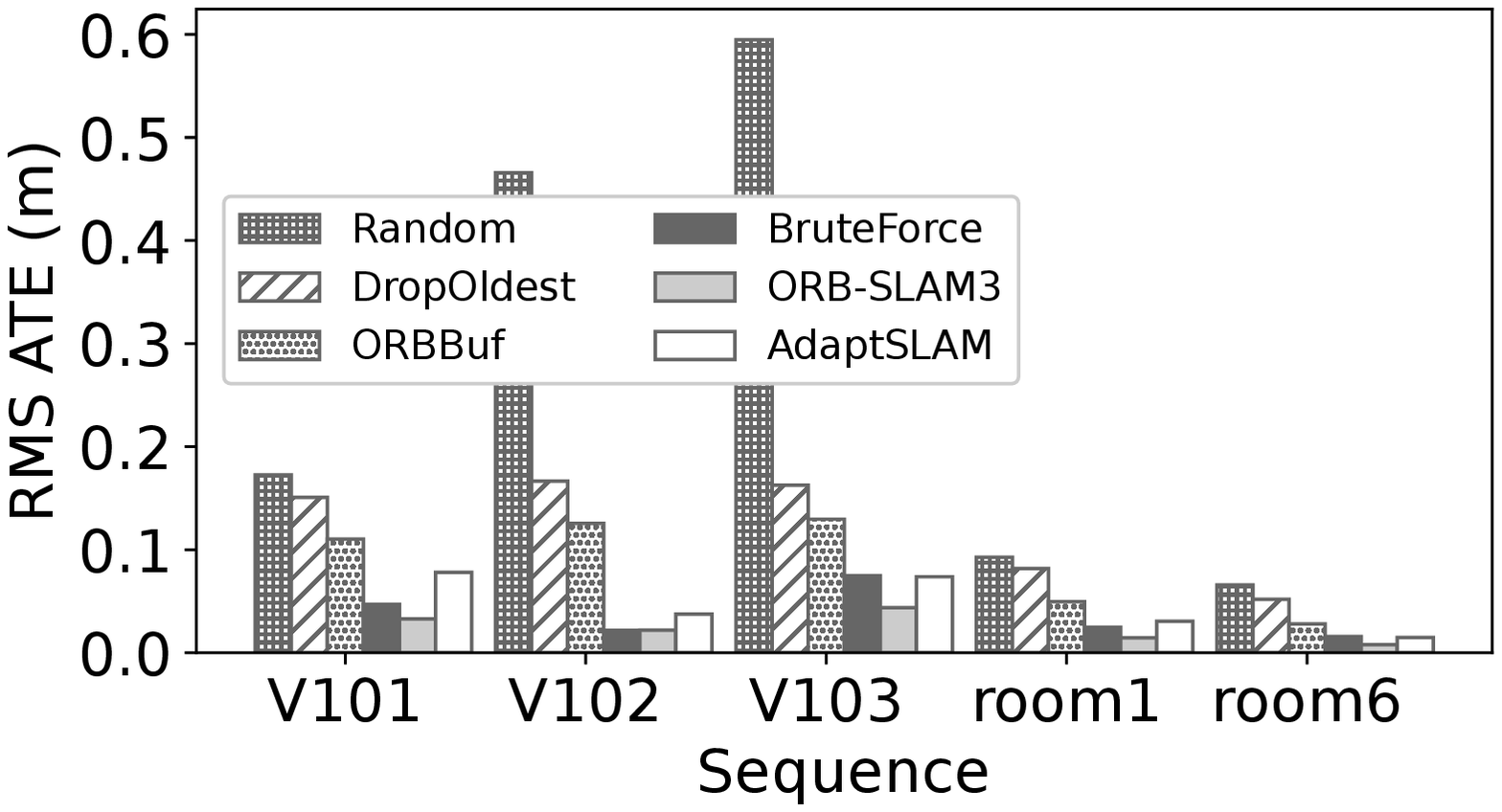}   
   \vspace{-0.6cm}
  \caption{VI-SLAM}
  \label{VIGlobal}
\end{subfigure}
\vspace{-0.15cm}
  \caption{RMS ATE for 6 keyframe selection methods in the global map construction for 5 sequences in EuRoC and TUM.
  }
     \label{Fig:Global}
\end{minipage}
\hspace{0.1cm}
\begin{minipage}[t]{0.29\linewidth}
\centering
   \includegraphics[width=0.9\textwidth]{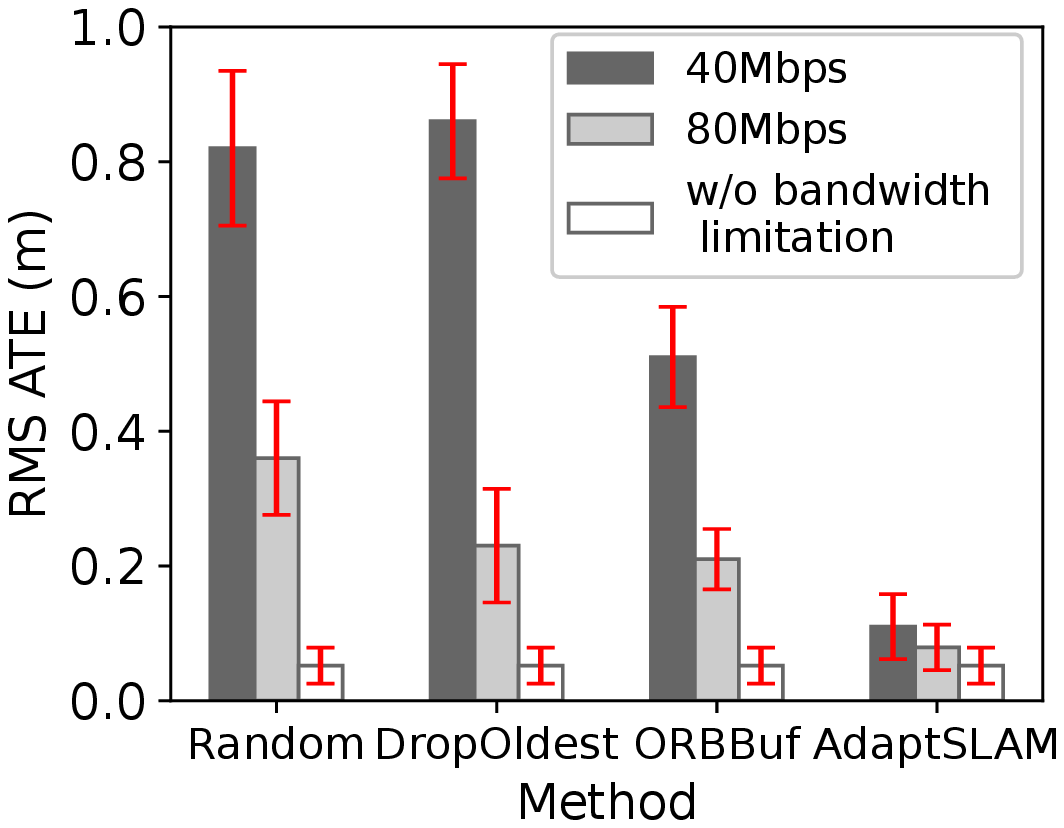}
   \vspace{-0.2cm}
  \caption{RMS ATE for different available bandwidth for offloading keyframes (for EuRoC V102).}
  \label{bandwidth}
\end{minipage}
\vspace{-0.7cm}
\end{figure*}

We implement AdaptSLAM on the open-source  ORB-SLAM3~\cite{orbslam3} framework which typically outperforms
older SLAM methods~\cite{klein2007parallel,dong2019pair}, with both V- and VI- configurations. The edge server modules are run on a Dell XPS 8930 desktop with Intel (R) Core (TM) i7-9700K CPU@3.6GHz and NVIDIA
GTX 1080 GPU under Ubuntu 18.04LTS. In \S\ref{sec:sim}, the mobile device
modules are run on the same desktop under simulated computation and network constraints. In \S\ref{sec:implementation}, the mobile device modules are implemented on a laptop
 (with an AMD Ryzen 7 4800H CPU and an NVIDIA
GTX 1660 Ti GPU), using a virtual machine with 4-core CPUs
and 8GB of RAM. 
The weight $w_e$, $e=((n,m),c)$ is set as the number of common map features visible in keyframes $n$ and $m$ for covisibility edges, similar to~\cite{orbslam3,Broadcast}, and  the  IMU edge weight is set as a large value (i.e., $500$) as the existence of IMU measurements greatly reduces the tracking error. We empirically set $H=5$ and $l_{thr}=30$ in Algorithm~\ref{alg_opt} to ensure low complexity and good performance at the same time.

\textbf{Metric.} 
We use root mean square (RMS) absolute trajectory error (ATE) as the SLAM performance metric which is commonly  used in the literature~\cite{orbslam3,zhang2018tutorial}. ATE is the absolute distance between the estimated and ground truth trajectories. 

\textbf{Baseline methods.}
We compare AdaptSLAM with 5 baselines. 
\textbf{Random} selects the keyframe randomly. \textbf{DropOldest} drops the oldest keyframes when the number of keyframes is constrained. 
\textbf{ORBBuf}, proposed in~\cite{ORBBuf}, chooses the keyframes that maximize the minimal edge weight between the adjacent selected keyframes. \textbf{BruteForce}  examines all the combinations of keyframes 
to search for the optimal one that minimizes the uncertainty (in Problems~\ref{prb1} and~\ref{prb2}). BruteForce can achieve better SLAM performance than AdaptSLAM but is shown to have exponential  computation complexity in \S\ref{sec:LocalProblem}.  In the \textbf{original ORB-SLAM3}, 
the local map includes all covisibility keyframes, and the global map includes all keyframes. The original ORB-SLAM3 also achieves better SLAM performance and consumes more computation resources than AdaptSLAM as the numbers of keyframes in both local and global maps are large.

\textbf{Datasets.} We evaluate 
AdaptSLAM on public SLAM datasets containing V and VI sequences, including  TUM~\cite{TUMdataset} and EuRoC~\cite{burri2016euroc}. The difficulty of a SLAM sequence depends on the extent of device mobility and scene illumination. 
We use EuRoC sequences V101 (easy), V102 (medium), and V103 (difficult), and difficult TUM VI room1 and  room6 sequences. 
We 
report the results over 10 trials for each sequence.

\subsection{Simulated Computation and Network Constraints}
\label{sec:sim}

First, we limit the number of keyframes in the local map under computation constraints, and all keyframes are used to build the global map without communication constraints. Second, we maintain local maps as in the default settings of ORB-SLAM3, and limit the number of keyframes in the global map under constrained communications, where $D$ in Problem~\ref{prb2} is set according to the available bandwidth.

\textbf{Local map construction}. 
We demonstrate the RMS ATE of different keyframe selection methods, 
for different V-SLAM 
(Fig.~\ref{VLocal}) and VI-SLAM (Fig.~\ref{VILocal}) sequences.   
 The size of the local map is limited to 10 keyframes and 9 
 anchors in V-SLAM sequences, and 25 keyframes and 10
 anchors in VI-SLAM sequences (to ensure successful tracking while keeping a small local map). 
AdaptSLAM reduces the RMS ATE compared with Random, DropOldest, and ORBBuf \emph{by more than 70\%, 62\%, and 42\%}, averaged over all sequences. 
The performance of AdaptSLAM is close to BruteForce, which demonstrates that our greedy-based algorithms yield near-optimal solutions, with substantially reduced computational complexity. 
Moreover, \emph{the performance of AdaptSLAM is 
close to the original ORB-SLAM3} (less than 0.05~m RMS ATE difference for all sequences) \emph{even though the size of the 
local map is reduced by more than 75\%}.

The influence of the number $l_{loc}$ of keyframes in the local map  on the RMS ATE for different methods is shown in Fig.~\ref{localsize}. We present the results for EuRoC V102 (of medium difficulty), which are representative. 
When $l_{loc}$ is reduced from 30 to 10,  AdaptSLAM increases the RMS ATE by only 6.7\%, to 0.09~m, as compared to 0.37, 0.16, and 0.12 m for, 
correspondingly, Random, DropOldest, and ORBBuf.  
This indicates that AdaptSLAM achieves low tracking error under stringent computation resource constraints.

\textbf{Global map construction}. 
First,  we examine the case where only half of all keyframes are offloaded to build a global map, for V-SLAM (Fig.~\ref{VGlobal}) and VI-SLAM (Fig.~\ref{VIGlobal}) sequences. 
AdaptSLAM reduces the RMS ATE compared with the closest baseline ORBBuf \emph{by 27\% and 46\% on average} for V- and VI-SLAM, and has small performance loss compared with the original ORB-SLAM3, despite reducing the number of keyframes by half. 

Next, in Fig.~\ref{bandwidth}, we examine four methods
whose performance is impacted by the available bandwidth, under different levels of communication constraints.
 Without bandwidth limitations, 
all methods have the same performance as the global map holds all keyframes.  
When the 
bandwidth is limited, 
Random and DropOldest have the worst performance 
as they 
ignore the relations of keyframes in the pose graph. The ORBBuf performs better, but the tracking error is increased by 4.0$\times$ and 9.8$\times$ when the  bandwidth is limited to 80 and 40~Mbps. AdaptSLAM achieves the best performance, \emph{reducing the RMS ATE compared to ORBBuf by 62\%  
and 78\% when network bandwidth is 80 and 40~Mbps, correspondingly}. 
This 
highlights the superiority of AdaptSLAM in achieving high tracking accuracy under communication constraints.

\begin{minipage}[t]{0.49\textwidth}
\vspace{-0.2cm}
\hspace{-0.6cm}
  \begin{minipage}[t]{0.43\textwidth}
    \centering
     \includegraphics[width=0.9\linewidth]{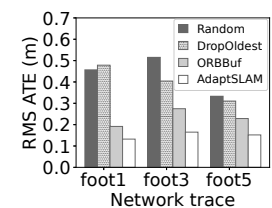}
    \captionof{figure}{RMS ATE for difference network traces. 
    }
    \label{Fig:networkTrace}
  \end{minipage}
  \hspace{-0cm}
  \begin{minipage}[t]{0.56\textwidth}
  \vspace{-2.9cm}
    \centering
\begin{tabular}{ cc } 
 \hline
 Method & Latency (ms)  \\
 \hline
 Random & 133.0$ \pm $86.3  \\
 \hline
 DropOldest & 139.9$ \pm $53.7\\
 \hline
ORBBuf & 149.3$ \pm $75.6 \\ 
 \hline
 BruteForce & 863.4$ \pm $123.5\\
 \hline
ORB-SLAM3 & 556.4$ \pm $113.7 \\ 
\hline
 AdaptSLAM & 162.8$ \pm $68.9\\
 \hline
\end{tabular}
\captionof{table}{The latency for local map construction and optimization. 
}
 \label{tab:latency}
    \end{minipage}
  \end{minipage}

\subsection{Real-World Computation and Network Constraints}
\label{sec:implementation}
            
Following the approach of splitting modules between the edge server and the mobile device~\cite{ben2020edge}, 
we split the modules as shown in Fig.~\ref{architecture}. The 
server and the 
device are connected via a network cable to minimize other factors. To ensure reproducibility, we replay the network traces  collected from a 
4G network~\cite{van2016http}. Focusing on mobile devices carried by users, we choose   network traces (foot1, foot3, and foot5)  collected by pedestrians. We set $D$ in Problem~\ref{prb2} according to the 
traces.

We examine the RMS ATE under the network traces in Fig.~\ref{Fig:networkTrace} for the EuRoC V102 sequence. 
The results 
for only four methods are presented because the overall time taken for running the SLAM modules onboard is high for BruteForce and the original SLAM. \emph{AdaptSLAM reduces the RMS ATE by 65\%, 61\%, and 35\%} (averaged over all 
traces) compared with Random, DropOldest, and ORBBuf.  
AdaptSLAM 
achieves high tracking accuracy under real-world network traces. 

Table~\ref{tab:latency} shows the computation latency of mobile devices for all six methods. We compare the latency for running local map construction and optimization, which is the main source of latency for modules running onboard~\cite{ben2020edge}. 
Compared with AdaptSLAM, 
the original ORB-SLAM3 takes 3.7$\times$ as much time for optimizing the local map as all covisibility keyframes are included in the local map without keyframe selection. Without the edge-assisted architecture, the original ORB-SLAM3 also runs global mapping and loop closing onboard which have even higher latency~\cite{ben2020edge}. BruteForce takes 5.3$\times$ as much time for examining all the combinations of keyframes to minimize the local map uncertainty. 
The latency for constructing and optimizing local maps using AdaptSLAM is  close to that using Random and DropOldest ($<$12.3\% difference). Low latency for local mapping shows that edge-assisted SLAM is appealing, as local mapping is the biggest source of delay for modules executing onboard after offloading
the intensive tasks (loop closing and global mapping).

\section{Conclusion}
\label{sec:conclusion}
We present AdaptSLAM, an edge-assisted SLAM that efficiently select subsets of keyframes to build local and global maps, under constrained communication and computation resources. AdaptSLAM quantifies the pose estimate uncertainty of V- and VI-SLAM under the edge-assisted architecture, and  minimizes the uncertainty by 
low-complexity algorithms based on the approximate submodularity properties and computation reuse. AdaptSLAM is demonstrated to 
reduce the size of the local keyframe set by 75\% compared with the original ORB-SLAM3 with a small performance loss.

\section*{Acknowledgments}
 This work was supported in part by NSF grants CSR-1903136, CNS-1908051, and CNS-2112562, NSF CAREER Award IIS-2046072, by an IBM Faculty Award, and by the Australian Research Council under Grant DP200101627.

\bibliographystyle{IEEEtran}
\bibliography{bibliography}

\appendix

\subsection{Proof of Lemma~\ref{lemma:global}}
\label{proof:global}
The quadratic term  of the objective function in~\eqref{globalopt} is $\sum\limits_{e = \left( {\left( {n,m} \right),c} \right) \in {{\cal E}_{glob}}} {{\bf{p}}_{n,m}^ \top } {{\cal I}_e}{{\bf{p}}_{n,m}}$, where ${{\bf{p}}_{n,m}^ \top } {{\cal I}_e}{{\bf{p}}_{n,m}}$ can be rewritten as
\begin{equation}
\begin{aligned}
&{{\bf{p}}_{n,m}^ \top } {{\cal I}_e}{{\bf{p}}_{n,m}}
\\=&w_e\left( {{{\bf{p}}_n^ \top},{{\bf{p}}_m^ \top}} \right)\left( {\begin{array}{*{20}{c}}
{  {\mathbf{I}_6}{\mathcal{I}}}&{-{\mathbf{I}_6}{\mathcal{I}  }}\\
-{{\mathbf{I}_6}{\mathcal{I} }}&{ {\mathbf{I}_6}{\mathcal{I} }}
\end{array}} \right)\left( {{{\bf{p}}_n},{{\bf{p}}_m}} \right)\\
 =& \mathbf{w}_g{\Xi }_e{\mathbf{w}_g^\top},
\end{aligned}
\end{equation}
where the $i,j$-th block of $\Xi_e$, $\left[\Xi_e\right]_{i,j}$, is derived as
\[{\left[ {\Xi _e} \right]_{i,j}} = \left\{ {\begin{array}{*{20}{c}}
{ - w_e{\cal I},}&{u_i = {n},u_j = {m}}\\
{w_e{\cal I},}&{u_i = u_j = n}\\
{0,}&{{\text{otherwise}}}
\end{array}} \right..\]
From the definition of $ \mathcal{I}_{glob}\left(\mathcal{K}_{g,edge}\right)$ 
and the global pose graph optimization formulation in \S\ref{sec:serverFunc}, we can obtain that $ \mathcal{I}_{glob}\left(\mathcal{K}_{g,edge}\right)={\sum\limits_{e \in {\mathcal{E}_{g,edge}}}  }\Xi_e$. Hence,  
$\mathbf{L}_{glob}$ is given by  \eqref{eq:Lglob},
which concludes the proof.

\subsection{Proof of Lemma~\ref{lemma:local}}
\label{proof:local}
As introduced in \S\ref{sec:userFunc}, the local pose graph optimization is to solve $\mathop {\min }\limits_{\{{\tilde{\bf{P}}}_{{n}}\}_{{n} \in {\mathcal{K}_{loc}}}} \sum\limits_{e \in {{\cal E}_{loc}} \cup {{\cal E}_{l,f}}} 
{{{\left( {{\bf{x}}_e} \right)}^\top}
\mathcal{I}_ e
{\bf{x}}_e}$. 
In optimizing  poses of keyframes in $\mathcal{K}_{loc}$, the  poses of keyframes in $\mathcal{K}_{fixed}$ are fixed. Hence, the quadratic term of the objective function can be rewritten as
\begin{equation*}
\label{eq:optnew}
\begin{aligned}
&\sum\limits_{e = \left( {\left( {n,m} \right),c} \right) \in {{\cal E}_{loc}} \cup {{\cal E}_{l,f}}} {{\bf{p}}_{n,m}^ \top } {{\cal I}_e}{{\bf{p}}_{n,m}} \\=& \sum\limits_{e = \left( {\left( {n,m} \right),c} \right) \in {{\cal E}_{loc}}} {{\bf{p}}_{n,m}^ \top } {{\cal I}_e}{{\bf{p}}_{n,m}} \\&+ \sum\limits_{e = \left( {\left( {n,m} \right),c} \right)\in {{\cal E}_{l,f}},n \in {{\cal K}_{loc}},m \in {{\cal K}_{fixed}} } {{\bf{p}}_n^ \top } {{\cal I}_e}{{\bf{p}}_n} \\&+ \sum\limits_{e = \left( {\left( {n,m} \right),c} \right)\in {{\cal E}_{l,f}},n \in {{\cal K}_{fixed}},m \in {{\cal K}_{loc}} } {\left( { - {\bf{p}}_m^ \top } \right)} {{\cal I}_e}\left( { - {{\bf{p}}_m}} \right).
\end{aligned}
\end{equation*}
 According to the above analysis, we can reformulate \eqref{localopt} as 
\begin{equation*}
\label{eq:proof2_2}
\mathop{\min}\limits_{\{\tilde{\mathbf{P}}_n\}_{n\in\mathcal{K}_{loc}}} \mathbf{w}_l\Lambda _{loc}\left(\mathcal{K}_{loc},\mathcal{K}_{fixed}\right) {\mathbf{w}_l^\top},
\end{equation*} 
where $\Lambda _{loc} \left(\mathcal{K}_{loc},\mathcal{K}_{fixed}\right)$ is the $\left|\mathcal{K}_{loc}\right|\times \left|\mathcal{K}_{loc}\right|$ block matrix whose $i,j$-th block is 
\[\begin{aligned}
&[\Lambda_{loc} \left(\mathcal{K}_{loc},\mathcal{K}_{fixed}\right)]_{i,j}=\\&\left\{ {\begin{array}{*{20}{c}}
{- \sum\limits_{e = \left( {( {{r_{{i}}},{r_{{j}}}} ),c} \right)\in {\mathcal{E}_{loc}}} {{w_e}} \mathcal{I},\;\;\;\;\;\;\;\;\;\;\;\;\;\;\;{i} \ne {j}}\\
{\begin{array}{l}
\left\{ {\sum\limits_{e = \left( {( {{r_{{i}}},q}),c} \right)\in {\mathcal{E}_{l,f}},q \in \mathcal{K}_{fixed}} {{w_e}} } \right. + \\
\left. {\sum\limits_{e = \left( {( {{r_{{i}}},q}),c} \right) \in {\mathcal{E}_{loc}},q \in \mathcal{K}_{loc},q \ne {r_{{i}}}} {{w_e}} } \right\}\mathcal{I}
\end{array},\ \ i = j}
\end{array}} \right.
\end{aligned}. \]

According to the uncertainty definition  in Definition~\ref{def:uncertainty}, the uncertainty of the local pose graph  is calculated as $-
\log \det \left( {\tilde{\cal I}_{loc}\left( {\mathcal{K}_{loc},\mathcal{K}_{fixed}} \right)}
\right)$, where ${\tilde{\mathcal{I}}_{loc}}\left(\mathcal{K}_{loc},\mathcal{K}_{fixed}\right)$ is given by \eqref{eq:Lloc}.

\subsection{Proof of Lemma~\ref{bound}}
\label{proof:bound}

According to the definition of the submodularity ratio given in~\eqref{eq:ratio}, the submodularity ratio $\gamma$  of the objective function in Problem~\ref{subproblem1} can be calculated as \eqref{eq:gamma}, where \eqref{gamma1} follows from the definition of the submodularity ratio. The denominator of \eqref{eq:gamma}, denoted as $\varsigma $, is lower bounded by

\begin{figure*}
\begin{equation}
\label{eq:gamma}
\begin{aligned}
\gamma& \labrel={gamma1}  \mathop {\min }\limits_{L \subseteq {{\cal K}_{add}},S \subseteq {{\cal K}_{add}},\left| S \right| \leqslant {l_{loc}} - {l_b},x \in {{\cal K}_{add}},x\not  \in S \cup L} \frac{{ - \mathsf{Unc}\left( {{{\cal K}_{base}} \cup L \cup \left\{ x \right\}} \right) + \mathsf{Unc}\left( {{{\cal K}_{base}} \cup L} \right)}}{{ - \mathsf{Unc}\left( {{{\cal K}_{base}} \cup L \cup S \cup \{ x\} } \right) + \mathsf{Unc}\left( {{{\cal K}_{base}} \cup L \cup S} \right)}}
\\&= \mathop {\min }\limits_{L \subseteq {{\cal K}_{add}},S \subseteq {{\cal K}_{add}},\left| S \right| \leqslant {l_{loc}} - {l_b},x \in {{\cal K}_{add}},x\not  \in S\cup L} \frac{{\log \frac{{\det \left( {{{\tilde {\cal I}}_{loc}}\left( {{{\cal K}_{base}} \cup L \cup \left\{ {x,k} \right\},\emptyset } \right)} \right)}}{{\det \left( {{{\tilde {\cal I}}_{loc}}\left( {{{\cal K}_{base}} \cup L \cup \left\{ k \right\}} \right),\emptyset } \right)}}}}{{\log \frac{{\det \left( {{{\tilde {\cal I}}_{loc}}\left( {{{\cal K}_{base}} \cup L \cup S \cup \left\{ {x,k} \right\},\emptyset } \right)} \right)}}{{\det \left( {{{\tilde {\cal I}}_{loc}}\left( {{{\cal K}_{base}} \cup L \cup S \cup \left\{ k \right\}} \right),\emptyset } \right)}}}}.
\end{aligned}
\end{equation}

\end{figure*}

\begin{figure*}
\begin{equation}
\begin{aligned}
\label{eq:gamma2}
&\gamma  = 1 + \frac{\log \left( {\mathop {\min }\limits_{L \subseteq {{\cal K}_{add}},S \subseteq {{\cal K}_{add}},\left| S \right| \leqslant {l_{loc}} - {l_b},x \in {{\cal K}_{add}},x\not  \in S\cup L} \frac{{\det \left( {{{\tilde {\cal I}}_{loc}}\left( {{{\cal K}_{base}} \cup L \cup \left\{ {x,k} \right\},\emptyset } \right)} \right)\det \left( {{{\tilde {\cal I}}_{loc}}\left( {{{\cal K}_{base}} \cup L \cup S \cup \left\{ k \right\}} \right),\emptyset } \right)}}{{\det \left( {{{\tilde {\cal I}}_{loc}}\left( {{{\cal K}_{base}} \cup L \cup S \cup \left\{ {x,k} \right\},\emptyset } \right)} \right)\det \left( {{{\tilde {\cal I}}_{loc}}\left( {{{\cal K}_{base}} \cup L \cup \left\{ k \right\}} \right),\emptyset } \right)}}} \right)} {\varsigma }
\\&=1 + \frac{\log \left( {\mathop {\min }\limits_{L \subseteq {{\cal K}_{add}},S \subseteq {{\cal K}_{add}},\left| S \right| \leqslant {l_{loc}} - {l_b},x \in {{\cal K}_{add}},x\not  \in S\cup L} \det \left( {\frac{{\left( {{\mathbf{Q}_1} + {\mathbf{Q}_2} + \left[ {\begin{array}{*{20}{c}}
{{{\bf{0}}_z}}&{}&{}\\
{}&{{{\bf{I}}_{\left| S \right|}}}&{}\\
{}&{}&0
\end{array}} \right]} \right)\left( {{\mathbf{Q}_1} + {\mathbf{Q}_3} + \left[ {\begin{array}{*{20}{c}}
{{{\bf{0}}_{z + \left| S \right|}}}&{}\\
{}&1
\end{array}} \right]} \right)}}{{\left( {{\mathbf{Q}_1} + {\mathbf{Q}_2} + {\mathbf{Q}_3} + {\mathbf{Q}_4}} \right)\left( {{\mathbf{Q}_1} + \left[ {\begin{array}{*{20}{c}}
0&0\\
0&{{{\bf{I}}_{{\left| S \right| + 1}}}}
\end{array}} \right]} \right)}}} \right)} \right)}{{\varsigma }}
\\ &= 1+ \frac{\log \left( {\mathop {\min }\limits_{L \subseteq {{\cal K}_{add}},S \subseteq {{\cal K}_{add}},\left| S \right| \leqslant {l_{loc}} - {l_b},x \in {{\cal K}_{add}},x\not  \in S\cup L} \det \frac{{{{\left( {{\mathbf{Q}_1}} \right)}^2} + {\mathbf{Q}_1}{\mathbf{Q}_2} + {\mathbf{Q}_1}{\mathbf{Q}_3} + \left( {{\mathbf{Q}_2} + {\mathbf{Q}_3}} \right)\left[ {\begin{array}{*{20}{c}}
0&0\\
0&{{{\bf{I}}_{\left| S \right| + 1}}}
\end{array}} \right] + {\mathbf{Q}_2}{\mathbf{Q}_3}}}{{{{\left( {{\mathbf{Q}_1}} \right)}^2} + {\mathbf{Q}_1}{\mathbf{Q}_2} + {\mathbf{Q}_1}{\mathbf{Q}_3} + \left( {{\mathbf{Q}_2} + {\mathbf{Q}_3}} \right)\left[ {\begin{array}{*{20}{c}}
0&0\\
0&{{{\bf{I}}_{\left| S \right| + 1}}}
\end{array}} \right] + {\mathbf{Q}_4}}}} \right)}{\varsigma }
\\& \labrel\geqslant{gamma2} 1 + \frac{1}{\vartheta }\log \left( {1 - {\mathbf{g}_1}{{ {\mathbf{Q}} }^{ - 1}}\mathbf{g}_1^\top} \right) .
\end{aligned}
\end{equation}
\hrulefill
\end{figure*}

\begin{equation}
\begin{aligned}
\label{eq:dominator}
\varsigma =&\log\frac{{\det \left( {{{\tilde {\cal I}}_{loc}}\left( {{{\cal K}_{base}} \cup L \cup S \cup \left\{ {x,k} \right\},\emptyset } \right)} \right)}}{{\det \left( {{{\tilde {\cal I}}_{loc}}\left( {{{\cal K}_{base}} \cup L \cup S \cup \left\{ k \right\}} \right),\emptyset } \right)}}
\\ \geqslant &  \sum\limits_{n \in  {{{\cal K}_{base}} \cup L \cup S} } {\log {w_{x,{n}}}} 
 \\
\geqslant &\mathop {\min }\limits_{{m} \in  {{{\cal K}_{add}}} } \sum\limits_{{n} \in  {{{\cal K}_{base}}} } {\log {w_{{n},{m}}}}  \triangleq \vartheta,
\end{aligned}
\end{equation}
where the first inequality is due to the fact that the determinant of the reduced weighted Laplacian matrix  is equal to the tree-connectivity of its corresponding graph~\cite{khosoussi2019reliable}.

Substituting~\eqref{eq:dominator} into~\eqref{eq:gamma}, $\gamma$ can be further calculated by~\eqref{eq:gamma2}, where $\mathbf{I}_i$ and $\mathbf{0}_i$ are the $i \times i$ identity matrix and zero matrix, 
and $\mathbf{Q}=\mathbf{Q}_1+\mathbf{Q}_2+\mathbf{Q}_3+\mathbf{Q}_4$. $\mathbf{Q}_1$, $\mathbf{Q}_2$, $\mathbf{Q}_3$ and $\mathbf{Q}_4$ are defined as follows. We express ${{\cal K}_{base}} \cup L \cup S \cup \{  x, k\}$ as ${\{ {s_i}\} _{i = \{ 1, \cdots ,z+|S|+ 2\} }}$ where $z=|\mathcal{K}_{base}\cup L|$, $s_i \in \mathcal{K}_{base}\cup L$ when $i\leqslant  z$, $s_i \in S$ when $z<i\leqslant z+|S|$, $s_{z+|S|+1}=x$, and $s_{z+|S|+2}=k$. For each edge $e$, we define a vector $\mathbf{q}_e$, and each element $[\mathbf{q}_e]_{i}=-[\mathbf{q}_e]_{j}=w_e$ if vertexes $s_i$ and $s_j$ are the head or tail of $e$ and zero otherwise. We then get $\tilde{\mathbf{q}}_e$ after removing the last element of $\mathbf{q}_e$. $\mathbf{Q}_1$, $\mathbf{Q}_2$, $\mathbf{Q}_3$ and $\mathbf{Q}_4$ are defined as ${{\bf{Q}}_1} = \frac{1}{2}\sum\limits_{e = \left( {( {{s_i},{s_j}}),c} \right),{s_i},{s_j} \in {{\cal K}_{base}} \cup L} {{{\widetilde {\bf{q}}}_e}} {{{\widetilde {\bf{q}}}_e^\top}}$ ($\frac{1}{2}$ is used because the edges from $s_i$ to $s_j$ and from $s_j$ to $s_i$ are both included), ${{\bf{Q}}_2} = \sum\limits_{e = \left( {( {{s_i},x} ),c} \right),{s_i} \in {{\cal K}_{base}} \cup L} {{{\widetilde {\bf{q}}}_e}}{{{\widetilde {\bf{q}}}_e^\top}}$, ${{\bf{Q}}_3} = \sum\limits_{e = \left( {( {{s_i},{s_j}} ),c} \right),{s_i} \in  {{{\cal K}_{base}} \cup L},{s_j} \in S} {{{\widetilde {\bf{q}}}_e}}{{{\widetilde {\bf{q}}}_e^\top}}$, and ${{\bf{Q}}_4} = \sum\limits_{e = \left( {( {x,{s_j}} ),c} \right),{s_j} \in S} {{{\widetilde {\bf{q}}}_e}} {{{\widetilde {\bf{q}}}_e^\top}}$.
$\mathbf{g}_1$  
is given by
\[{\mathbf{g}_1} = \left[ {\underbrace {0, \cdots ,0}_{z{\rm{ }}\,`{\rm{0\rq s}}},-{w_{x,{s_1}}}, \cdots ,-{w_{x,{s_{\left| S \right|}}}}\sum\limits_{i = 1}^{|S|} {{w_{x,{s_i}}}} } \right],\]
where ${w_{\max }} = \mathop {\max }\limits_{{n},{m} \in  {{{\cal K}_{base}} \cup {{\cal K}_{add}}} } {w_{{n},{m}}}$, and $w_{x,s_{i}}\leqslant w_{max}$ for $s_i \in S$.
\eqref{gamma2} in~\eqref{eq:gamma2} is because for invertible positive semidefinite matrices $\mathbf{M}$, $\mathbf{N}$,  $\det (\mathbf{M})\geqslant\det(\mathbf{N})$ holds when $\mathbf{M}-\mathbf{N}$ is positive semidefinite~\cite{strang2006linear}.

We will prove that $\left\| {{{\bf{Q}}^{ - 1}}} \right\| \leqslant  \frac{1}{{\left| {{\mathcal{K}_{base}}} \right|{w_{\min }} - {w_{\max }}}}$ when $\left| {{{\cal K}_{base}}} \right|$ is significantly larger than $ \left| {{{\cal K}_{add}}} \right|$, where $\|\mathbf{M}\|$ is the $l_\infty $ norm of $\mathbf{M}$ (defined as the largest magnitude among each element in $\mathbf{M}$), and ${w_{\min }} = \mathop {\min }\limits_{{n} ,{m} \in  {{{\cal K}_{base}} \cup {{\cal K}_{add}}} } {w_{{n},{m}}}$. Rewrite ${\bf{Q}}$ as ${\bf{Q}} = {{\bf{D}}_Q} - {{\bf{E}}_Q}$, where $\mathbf{D}_Q$ is a diagonal matrix with elements on the diagonal  the same as those of  $\bf{Q}$, and $\mathbf{E}_Q=\mathbf{D}_Q-\mathbf{Q}$. ${\bf{Q}}^{-1}$ is calculated as 
\begin{equation*}
\begin{aligned}&{{\bf{Q}}^{ - 1}} = {\left( {{{\bf{I}}_{z + \left| S \right| + 1}} - {\bf{D}}_Q^{ - 1}{{\bf{E}}_Q}} \right)^{ - 1}} {\bf{D}}_Q^{ - 1}\\= &\left[\sum\limits_{i = 0}^\infty  {{{\left( {{\bf{D}}_Q^{ - 1}{{\bf{E}}_Q}} \right)}^i}}\right]
{\bf{D}}_Q^{ - 1}. 
\end{aligned}
\end{equation*}
${\bf{D}}_Q^{ - 1}{{\bf{E}}_Q}$ has the properties that all elements in ${\bf{D}}_Q^{ - 1}{{\bf{E}}_Q}$ are positive and smaller than $\frac{{{w_{\max }}}}{{\left| {{\mathcal{K}_{base}}} \right|{w_{\min }}}}$, and all row vectors has an $l_\infty$ norm smaller than 1. Hence, we have ${\left\| \left({{\bf{D}}_Q^{ - 1}{{\bf{E}}_Q}} \right)^i\right\|} \leqslant \frac{{{w_{\max }}}}{{\left| {{{\cal K}_{base}}} \right|{w_{\min }}}}$.

${{{\bf{Q}}^{ - 1}}} $ is given by
\begin{equation}
\label{eq:Qinverse}
\begin{aligned}
&\left\| {{{\bf{Q}}^{ - 1}}} \right\| \leqslant \frac{1}{{\left| {{{\cal K}_{base}}} \right|{w_{\min }}}}\left\| {\sum\limits_{i = 1}^\infty  {{{\left( {{\bf{D}}_Q^{ - 1}{{\bf{E}}_Q}} \right)}^i}} } \right\|
\\&\leqslant \frac{1}{{\left| {{{\cal K}_{base}}} \right|{w_{\min }}}}\frac{1}{{1 - \left\| {{\bf{D}}_Q^{ - 1}{{\bf{E}}_Q}} \right\|}} 
\\&\leqslant \frac{1}{{\left| {{\mathcal{K}_{base}}} \right|{w_{\min }}}}\frac{1}{{1 - \frac{{{w_{\max }}}}{{\left| {{\mathcal{K}_{base}}} \right|{w_{\min }}}}}}
\\&= \frac{1}{{\left| {{\mathcal{K}_{base}}} \right|{w_{\min }} - {w_{\max }}}}.
\end{aligned}
\end{equation}

	Substituting~\eqref{eq:Qinverse} into~\eqref{eq:gamma2}, we can derive that $\gamma \geqslant 1 + \frac{1}{\vartheta }\log \left( {1 - \frac{{4{{\left| {{{\cal K}_{add}}} \right|}^2}{w^2_{{{\max }}}}}}{{\left| {{{\cal K}_{base}}} \right|{w_{\min }} - {w_{\max }}}}} \right)$.

\end{document}